%% file: main.tex
\documentclass[letterpaper, 10 pt, conference]{ieeeconf}
\usepackage{cite}
\usepackage{amsmath,amssymb,amsfonts}
\usepackage{graphicx}
\usepackage[hidelinks]{hyperref}
\hypersetup{hidelinks=true}
\usepackage{textcomp}
\def\BibTeX{{\rm B\kern-.05em{\sc i\kern-.025em b}\kern-.08em
    T\kern-.1667em\lower.7ex\hbox{E}\kern-.125emX}}

\input{preamble/common}
\input{preamble/reviewDraftClean} % <--- If you comment the previous line and have errors, uncomment this line for one compilation (you can remove it again afterwards)

\title{Constrained Koopman Control Parametrization: Convex Controller Synthesis for Tracking under Input Constraints}

\author{Taha Ondogan,
Ran Jing, Andrew P. Sabelhaus, and Roberto Tron
\thanks{T. Ondogan, R. Jing, A.P. Sabelhaus, and R. Tron are with the Department of Mechanical Engineering, Boston University, Boston, MA 02215, USA (e-mail: {tondogan, rjing, asabelha, tron}@bu.edu).}
\thanks{A. Sabelhaus and R. Tron are also with the Division of Systems Engineering, Boston University, Boston, MA 02215, USA.}
\thanks{This work was supported by the U.S. National Science Foundation under Award 2432394.}
}

\NewDocumentCommand\kocop{}{KoCoP}
\WithSuffix\NewDocumentCommand\kocop*{}{Koopman Control Parametrization}
\NewDocumentCommand\ckocop{}{cKoCoP}
\WithSuffix\NewDocumentCommand\ckocop*{}{constrained Koopman Control Parametrization}

\begin{document}
\IEEEoverridecommandlockouts
\maketitle
\thispagestyle{empty}
%\pagestyle{empty}

%%%%%%%%%%%%%%%%%%%%%%%%%%%%%%%%%%%%%%%%%%%%%%%%%%%%%%%%%%%%%%%%%%%%%%%%%%%%%%%%
\begin{abstract}
Koopman operator theory represents nonlinear systems with linear and bilinear terms in the state and input observables, greatly simplifying control design. However, incorporating input constraints into Koopman-based controller synthesis remains challenging, limiting their applicability to constrained control tasks such as tracking. This paper develops a constrained Koopman Control Parametrization (cKoCoP) framework for synthesizing state-feedback controllers that satisfy prescribed input constraints with Lyapunov-certified stability of the identified lifted closed loop. The cKoCoP offers nonlinear state-feedback via linear-in-observables feedback, enabling linear controller design. The proposed controller is obtained through convex optimization: we incorporate ellipsoidal or polytopic input constraints with the S-procedure or Farkas-lemma-based conditions. We also introduce a Control Lyapunov Function (CLF) for trajectory tracking which incorporates input constraints. The approach is validated on two canonical nonlinear systems, a double pendulum and a unicycle, demonstrating data-driven tracking while respecting input constraints. Experiments validate the method's robustness to measurement noise, and runtime improvements over Koopman Model Predictive Control (KMPC) baselines. The results demonstrate that cKoCoP achieves competitive tracking performance with substantially reduced online computational cost, supporting its potential for real-time constrained tracking applications.
\end{abstract}

\section{Introduction}
Nonlinear dynamics arises in a wide range of robotic systems. However, nonlinear control and stability analyses are limited compared to linear systems, as linear systems offer simple-to-formulate closed form controllers (e.g., LQR \cite{kalman1960general}). 
% Therefore many control algorithms employ linearization before design. However these linearizations are often local around the equilibrium point. 
To overcome limitations of locality and linearization while retaining the benefits of linear methods, Koopman operators transform a nonlinear system into a higher (potentially infinite) dimensional linear system. Koopman operator control methods have been adopted in robotics systems \cite{shi2026koopman, khorshidi2024centroidal}, fluid systems \cite{arbabi2017ergodic, schmid2022dynamic}, and power systems \cite{matavalam_datadriven_2024}. However finding such high dimensional systems is often challenging, as the model is often not available, and we can only get approximations from data that often contain noise. Furthermore, it results in finite-dimensional approximation of the Koopman operator. Therefore, many algorithms utilize learning these systems from the data either with a preselected set of observables \cite{guo2021koopman, abraham2017model}, or deep learned observables
\cite{lusch2018deep, fathi2023course, mondal2024efficient}. Extended Dynamic Mode Decomposition (EDMD) algorithm is widely used to fit a linear operator between a dictionary of nonlinear observable functions via least squares from the data \cite{williams2015data,korda2018convergence,Brunton2022ModernKoopman}.

Koopman operators have been shown to capture long-term system evolution effectively \cite{azencot2020forecasting}, making them well suited for tracking applications. In this context, receding-horizon methods such as model predictive control (MPC) are widely adopted in the Koopman literature \cite{abraham2017model, joglekar2023analytical, korda2018linear}. These approaches typically follow a two-stage framework: first, a data-driven linear Koopman model is identified using a set of predefined observables (such as EDMD); second, an MPC controller is designed and applied to the resulting linear controlled Koopman system approximation. 
In the standard EDMD-based setting, the resulting controller is often a linear MPC problem in the lifted coordinates. To improve modeling capability, \cite{bruder2021advantages} introduced bilinear Koopman system representations within the MPC framework, enabling a broader class of input-affine nonlinear systems to be captured. This extension has demonstrated improved tracking performance in several settings, including nonlinear MPC (NMPC) and bilinear MPC (BMPC). Despite these advantages, such approaches remain computationally demanding, as they require solving an optimization problem over a finite prediction horizon at every control step.

A crucial property desired in tracking control systems is the satisfaction of actuation constraints, especially since data-driven Koopman approximations may exhibit model mismatch relative to the actual nonlinear dynamics. While MPC naturally allows the incorporation of input constraints, it further increases the online computational complexity \cite{rawlings2020model}. Moreover, even if the admissible set is convex in the original state space, its image under the nonlinear lifting map may become nonconvex or difficult to characterize explicitly, further complicating the use of MPC in the lifted Koopman coordinates.

A preliminary version of this work proposed the Koopman Control Parametrization
(\kocop{}), a lifted-state feedback controller structure \cite{ondogan2025koopman}.
The method identifies a linear closed-loop structure in the lifted coordinates and
synthesizes the controller through a convex optimization problem that is solved once
offline. As a static state-feedback controller, \kocop{} is computationally inexpensive
compared with MPC formulations. However, the original \kocop{} formulation does not
account for actuation limits and is restricted to stabilization about the origin.

Building on the preliminaries reviewed in \cref{sec:preliminaries} and the original
\kocop{} construction summarized in \cref{sec:kocop}, this paper makes the following contributions:
\begin{itemize}
    \item An Alternate Convex Search procedure is introduced to the previous version of \kocop{} algorithm that addresses the biconvex nature of the problem
    \item A constrained \kocop{} formulation
    that guarantees satisfaction of actuation constraints over a predefined set of
    lifted states in \cref{sec:constraints} is developed. This is achieved through convex reformulations based on
    \emph{Farkas' Lemma} for polytopic constraints and the \emph{S-procedure} for
    ellipsoidal constraints, thereby preserving the convexity of the resulting
    optimization problem for a fixed Lyapunov matrix.
    \item We introduce a feedforward term that enables
    stabilization about nonzero equilibrium points by exploiting the corresponding
    error dynamics in \cref{sec:setpoint}.
    \item We develop a control-Lyapunov-function
    (CLF)-based tracking formulation that dynamically updates the feedforward term
    to track time-varying references while satisfying the prescribed input
    constraints in \cref{sec:clf_tracking}.
\end{itemize}

In the proposed approach, a stabilizing controller is first synthesized offline using
the \ckocop*{} formulation. Then, an online CLF-based quadratic program is
used to compute the tracking input under either polytopic or ellipsoidal actuation
constraints. The experimental setup is described in \cref{sec:experiments}, and presented tracking scenarios with actuation limits, robustness
results, and performance and runtime comparisons with several MPC baselines. 

\section{Notation and Preliminaries}\label{sec:preliminaries}

We use $t$ to denote the discrete-time index along a trajectory and $n$ to denote the sample 
index in a dataset. For a vector-valued function 
$\vpsi_x:\mathbb{R}^{n_x}\to\mathbb{R}^{n_{\vpsi_x}}$, we write $\vpsi_{x,t}:=\vpsi_x(x_t)$ for its value along a trajectory at time $t$, and 
$\vpsi_x^{(n)}:=\vpsi_x(x^{(n)})$ for its value at the $n$-th data sample. The $i$-th 
component of a vector $v$ is denoted by $v_i$. In particular, $g_i^\top$ denotes the 
$i$-th row of a matrix $G$. We use 
$[M]_{ij}$ to denote the $(i,j)$-th entry of a matrix $M$.

\subsection{Kronecker Product}

\begin{definition}
  Given $A\in\mathbb{R}^{m\times n}$ and $B\in\mathbb{R}^{p\times q}$, the Kronecker product 
$A\kron B\in\mathbb{R}^{mp\times nq}$ is defined as
  \begin{equation}\label{eq:kron}
    A\kron B = \bmat{
      A_{1,1} B & \cdots & A_{1,n}B \\
      \vdots & \ddots &           \vdots \\
      A_{m,1} B & \cdots & A_{m,n} B
    },
  \end{equation}
  \vspace{-1mm}
\end{definition}

The following standard identities follow directly from the definition of the Kronecker product.

\begin{lemma}
The Kronecker product distributes over matrix multiplication according to
\begin{equation}\label{eq:kron mixed}
(AC)\kron(BD)=(A\kron B)(C\kron D).
\end{equation}
\end{lemma}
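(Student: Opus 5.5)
The plan is to verify the mixed-product identity \eqref{eq:kron mixed} block by block, directly from the definition \eqref{eq:kron}. Fix dimensions so that the products are defined: $A\in\mathbb{R}^{m\times n}$, $C\in\mathbb{R}^{n\times k}$, $B\in\mathbb{R}^{p\times q}$, $D\in\mathbb{R}^{q\times r}$. Then $AC\in\mathbb{R}^{m\times k}$ and $BD\in\mathbb{R}^{p\times r}$, so both sides of \eqref{eq:kron mixed} lie in $\mathbb{R}^{mp\times kr}$. We first confirm that $(A\kron B)(C\kron D)$ is defined: $A\kron B$ is $mp\times nq$ and $C\kron D$ is $nq\times kr$.

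Next we view both factors on the right as block matrices. By \eqref{eq:kron}, $A\kron B$ is an $m\times n$ array of $p\times q$ blocks, the $(i,j)$ block being $A_{i,j}B$. Likewise $C\kron D$ is an $n\times k$ array of $q\times r$ blocks, the $(j,l)$ block being $C_{j,l}D$. The partitions are conformal: the column partition of the first factor ($n$ groups of size $q$) matches the row partition of the second. So block multiplication applies, and the $(i,l)$ block of the product is
\begin{equation*}
\sum_{j=1}^{n} (A_{i,j}B)(C_{j,l}D)=\Big(\sum_{j=1}^{n}A_{i,j}C_{j,l}\Big)BD=[AC]_{il}\,BD ,
\end{equation*}
where scalars commute past matrices.

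Finally, by \eqref{eq:kron} applied to $AC$ and $BD$, the $(i,l)$ block of $(AC)\kron(BD)$ is exactly $[AC]_{il}BD$. Since all $p\times r$ blocks agree for $1\le i\le m$ and $1\le l\le k$, the two matrices are equal.

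The only step needing care is justifying block multiplication with conformal partitions. If one prefers to avoid it, the fallback is an entrywise check. Index rows of a Kronecker product by pairs $(i,a)$ and columns by pairs $(j,b)$, so that $[A\kron B]_{(i,a),(j,b)}=A_{i,j}B_{a,b}$. The entry $((i,a),(l,c))$ of the right-hand side is then $\sum_{j,b}A_{i,j}B_{a,b}C_{j,l}D_{b,c}$. This sum factors as $\big(\sum_j A_{i,j}C_{j,l}\big)\big(\sum_b B_{a,b}D_{b,c}\big)$, which is the corresponding entry of the left-hand side.
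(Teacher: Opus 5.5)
Your proof is correct, and it is the direct verification from the definition that the paper has in mind. The paper gives no separate argument, only the remark that this standard identity follows from the definition of the Kronecker product; your conformal block-multiplication computation (with the entrywise check as backup) fills in exactly that detail, and it also states the dimension-compatibility hypotheses that the lemma leaves implicit.
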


\begin{corollary}\label{cor:kronecker_vectors}
Let $a\in\mathbb{R}^{n}$, $b\in\mathbb{R}^{m}$, and 
$M\in\mathbb{R}^{r\times m}$. Then
\begin{equation}
    a\kron (Mb) = (I_n\kron M)(a\kron b).
\end{equation}
\end{corollary}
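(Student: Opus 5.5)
The plan is to obtain the corollary as a direct specialization of the mixed-product identity \eqref{eq:kron mixed}, reading vectors as matrices with one column. Take $A = I_n$, $C = a$ (an $n\times 1$ matrix), $B = M$ (an $r\times m$ matrix), and $D = b$ (an $m\times 1$ matrix). These choices make both products $AC$ and $BD$ well defined, and $I_n\kron M$ is $nr\times nm$ while $a\kron b$ is $nm\times 1$, so the right-hand side product is conformable.

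Substituting into \eqref{eq:kron mixed} gives
\begin{equation*}
(I_n a)\kron(Mb) = (I_n\kron M)(a\kron b).
\end{equation*}
The claim then follows from the trivial simplification $I_n a = a$.

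I do not expect a real obstacle. The only points needing care are the conformability of the dimensions and the convention that vectors are column matrices. Under that convention, $a\kron b$ is the stacked vector $[a_1 b^\top,\dots,a_n b^\top]^\top$ from the definition \eqref{eq:kron}.

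As a sanity check, one can verify the identity blockwise. The matrix $I_n\kron M$ is block diagonal with $n$ copies of $M$. Multiplying it against the stacked blocks $a_i b$ gives blocks $a_i Mb$, and these are exactly the blocks of $a\kron(Mb)$.
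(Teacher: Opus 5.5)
Your proof is correct and is the intended argument. The paper gives no separate proof and presents this as a corollary of the mixed-product identity, and specializing that identity with $A=I_n$, $C=a$, $B=M$, $D=b$ and then using $I_n a=a$ is exactly what it means for the result to follow directly.
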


\begin{lemma}
    Given two vectors $a \in \mathbb{R}^n$ and $b \in \mathbb{R}^m$, the following property is satisfied
    \begin{equation}
        a \kron b = (I_n \kron b) a
    \end{equation}
    \label{eq:kron_identity_to_matrix_vector}
\end{lemma}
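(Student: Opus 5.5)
The identity $a\kron b=(I_n\kron b)\,a$ can be obtained in one line from the mixed-product rule \eqref{eq:kron mixed}, by treating every factor as a matrix with compatible shapes. View $a\in\mathbb{R}^{n}$ as an $n\times 1$ matrix and $b\in\mathbb{R}^{m}$ as an $m\times 1$ matrix. Then write $a=I_n a$ and $b=b\cdot 1$, where $1$ is the $1\times 1$ identity. Applying \eqref{eq:kron mixed} with $A=I_n$, $C=a$, $B=b$ and $D=1$ gives
\begin{equation*}
a\kron b=(I_n a)\kron(b\,1)=(I_n\kron b)(a\kron 1)=(I_n\kron b)\,a .
\end{equation*}
The last step uses the fact that $a\kron 1=a$, which is immediate from \eqref{eq:kron}: each entry $a_i$ is multiplied by the scalar block $1$.

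The only point that needs care is checking dimensions, so that the mixed-product rule is being applied legitimately. The product $AC=I_n a$ requires $I_n$ to be $n\times n$ and $a$ to be $n\times 1$, which holds. The product $BD=b\,1$ requires $b$ to be $m\times 1$ and $1$ to be $1\times 1$, which also holds. The factor $I_n\kron b$ is then $nm\times n$, and it multiplies $a\kron 1$, which is $n\times 1$. The result is $nm\times 1$, matching the size of $a\kron b$.

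As a sanity check independent of the lemma, one can verify the identity blockwise from \eqref{eq:kron}. The matrix $I_n\kron b$ is block diagonal with $n$ diagonal blocks equal to $b$. Multiplying it by $a$ therefore stacks the blocks $a_1 b,\dots,a_n b$, which is exactly the definition of $a\kron b$.

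I expect no real obstacle here. The only subtlety is the dimensional bookkeeping, in particular the inclusion of the scalar identity $1$ as the factor $D$.
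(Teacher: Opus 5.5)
Your proof is correct. The paper gives no explicit proof of this lemma; it only remarks that these standard identities ``follow directly from the definition of the Kronecker product.'' That is precisely your blockwise check: $I_n\kron b$ is block diagonal with $n$ copies of $b$, so $(I_n\kron b)a$ stacks $a_1b,\dots,a_nb$, which is \eqref{eq:kron} applied to $a\kron b$.

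Your primary route is different. You specialize the mixed-product rule \eqref{eq:kron mixed} with $A=I_n$, $C=a$, $B=b$, $D=1$. This is slicker, and it shows the identity is an instance of the same mechanism behind Corollary~\ref{cor:kronecker_vectors}, which you get by taking $B=M$, $D=b$ instead. However, it depends on the mixed-product lemma, which the paper also only states without proof. The definition-based check is fully self-contained by comparison. Either argument suffices, and your dimension bookkeeping, including $a\kron 1=a$, is right.
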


\begin{lemma}
    Let $a\in\mathbb{R}^{n}$ and $b\in\mathbb{R}^{m}$. There exists a permutation 
    matrix $T_{n,m}\in\mathbb{R}^{nm\times nm}$, called the commutation matrix, such that
    \begin{equation}
        a\kron b = T_{n,m}(b\kron a).
        \label{eq:commutation_matrix}
    \end{equation}
    Its entries are given by
    \begin{equation}
        \begin{aligned}
            [T_{n,m}]_{(k-1)m+\ell,\,(\ell-1)n+k} &= 1,\\
            k &= 1,\ldots,n,\qquad \ell=1,\ldots,m,
        \end{aligned}
    \end{equation}
    and all remaining entries are zero.
\end{lemma}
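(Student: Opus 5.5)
We verify the commutation-matrix identity \eqref{eq:commutation_matrix} entrywise, directly from the definition \eqref{eq:kron} of the Kronecker product specialized to column vectors. The plan is to index both sides, show that the prescribed matrix $T_{n,m}$ maps each entry of $b\kron a$ to the corresponding entry of $a\kron b$, and then check that $T_{n,m}$ is a permutation matrix.

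First, we read off the entries from \eqref{eq:kron}. For $a\in\mathbb{R}^n$ (an $n\times 1$ matrix) and $b\in\mathbb{R}^m$, the product $a\kron b$ stacks the blocks $a_k b$ for $k=1,\ldots,n$. Hence
\begin{equation*}
(a\kron b)_{(k-1)m+\ell}=a_k b_\ell,\qquad k=1,\ldots,n,\ \ell=1,\ldots,m.
\end{equation*}
By the same reasoning with the roles exchanged,
\begin{equation*}
(b\kron a)_{(\ell-1)n+k}=b_\ell a_k .
\end{equation*}

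Next, we show that the index maps are bijections onto $\{1,\ldots,nm\}$. The map $(k,\ell)\mapsto (k-1)m+\ell$ is a bijection from $\{1,\ldots,n\}\times\{1,\ldots,m\}$ onto $\{1,\ldots,nm\}$, by division with remainder. The map $(k,\ell)\mapsto(\ell-1)n+k$ is likewise a bijection. Consequently, the rule
\begin{equation*}
\sigma\bigl((k-1)m+\ell\bigr)=(\ell-1)n+k
\end{equation*}
defines a permutation $\sigma$ of $\{1,\ldots,nm\}$. The matrix $T_{n,m}$ in the statement has exactly one $1$ in each row $i$, located in column $\sigma(i)$. Because $\sigma$ is a bijection, it also has exactly one $1$ in each column. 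Therefore $T_{n,m}$ is a permutation matrix.

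Finally, we compute the product row by row. For $i=(k-1)m+\ell$,
\begin{equation*}
\bigl(T_{n,m}(b\kron a)\bigr)_i=(b\kron a)_{\sigma(i)}=b_\ell a_k=(a\kron b)_i .
\end{equation*}
Since this holds for every $i$, the identity \eqref{eq:commutation_matrix} follows.

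There is no real obstacle in this argument. The only step that requires care is the bookkeeping of indices: we must confirm that both index maps are genuine bijections, which is what makes $T_{n,m}$ a well-defined permutation matrix and not merely a selection matrix. The uniqueness of the quotient–remainder decomposition settles this.
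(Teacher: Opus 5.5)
Your entrywise verification is correct: reading off $(a\kron b)_{(k-1)m+\ell}=a_k b_\ell$ and $(b\kron a)_{(\ell-1)n+k}=b_\ell a_k$ from the block definition, and noting that both index maps are bijections onto $\{1,\ldots,nm\}$, gives both the identity and the fact that $T_{n,m}$ is a permutation matrix. This is essentially the paper's approach, since the paper only asserts that the identity follows directly from the definition of the Kronecker product and gives no further argument.
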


\subsection{Koopman Operator and Approximation of a Controlled Nonlinear System}

Consider a discrete time autonomous nonlinear dynamical system 
\begin{equation}
    x_{t+1} = F(x_{t}),
    \label{eq:system}
\end{equation}
where $x \in \mathbb{R}^{n_x}$. Let $\vpsi(\cdot) : \mathbb{R}^{n_x} \to \mathbb{R}^{n_{\vpsi_x}}$ be a vector valued function, named \emph{observables}, that maps the state to a higher (possibly infinite) dimensional space.
The Koopman operator $\mathcal{K}$ is a linear operator that acts on observables as

\begin{equation}
    \vpsi(x_{t+1}) = \vpsi(F(x_t)) = [\mathcal{K}\vpsi](x_t).
    \label{eq:koopman}
\end{equation}

The definition of the Koopman operator $\mathcal{K}$ in \eqref{eq:koopman} requires, in general, an infinite number of observables. However, common practical applications rely on a finite dimensional approximation, yielding a Koopman matrix $K$. For instance, the \emph{Extended Dynamic Mode Decomposition} (EDMD) relies on a finite set of functions as a basis for observables, and a least square estimation to obtain a finite dimensional matrix $K \in \mathbb{R}^{n_{\vpsi_x} \times n_{\vpsi_x}}$ as \cite{williams2015data}
\begin{equation}
    \min_K \sum_{i = 1}^{N}  \|\vpsi(x_{t+1}) - K\vpsi(x_t)\|^2_2
\end{equation}

Equation \eqref{eq:koopman} describes a linear evolution of the observables that approximates the dynamics of an autonomous nonlinear system. However, when control inputs are present, it is, in general, not possible to capture the cross-coupled interactions between the system state and the inputs with a purely linear model \cite{peitz2020data}.

In this work, we focus on control-affine nonlinear systems of the form,
\begin{equation}
    x_{t+1} = F(x_{t}) + G(x_{t})u_t.
    \label{eq:ctrl_affine_system}
\end{equation}
To model the interaction between the control input and the state, we consider a bilinear Koopman model which is described in detail in \Cref{sec:kocop}.
%This type of systems admits a bilinear representation that models the interaction between the control input and the state through a Kronecker product operation \cite{bruder2021advantages}. We use this insight to build a convex controller design method in \Cref{sec:kocop}.

\subsection{Lyapunov Stability and Schur Complement}
\begin{lemma}
    The linear discrete time system $x^+ = Ax$ is asymptotically stable if there exists a $ \lambda \in [0, 1)$ and a $P \succ 0$ such that
    \begin{equation}
        A^\top P A - \lambda P \preceq 0
        \label{eq:stability}
    \end{equation}

    Equivalently, \eqref{eq:stability} can be stated using Schur complements as
\begin{equation}
    \begin{bmatrix}
        P & PA \\
        A^\top P & \lambda P
    \end{bmatrix}
    \succeq 0. \label{eq:contraction}
\end{equation}
\end{lemma}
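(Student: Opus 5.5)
The plan is to prove the two parts separately. First, a quadratic Lyapunov argument gives asymptotic (in fact exponential) stability. Second, a Schur complement argument, which uses $P\succ 0$, shows that \eqref{eq:stability} and \eqref{eq:contraction} are equivalent.

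\emph{Stability.} I take the candidate Lyapunov function $V(x)=x^\top P x$. Along trajectories $x_{t+1}=Ax_t$, inequality \eqref{eq:stability} gives
\begin{equation*}
V(x_{t+1}) = x_t^\top A^\top P A\, x_t \le \lambda\, x_t^\top P x_t = \lambda V(x_t).
\end{equation*}
Iterating this yields $V(x_t)\le \lambda^t V(x_0)$. Since $P\succ 0$, we have $\lambda_{\min}(P)\|x\|^2\le V(x)\le \lambda_{\max}(P)\|x\|^2$, and therefore
\begin{equation*}
\|x_t\|^2\le \frac{\lambda_{\max}(P)}{\lambda_{\min}(P)}\lambda^t\|x_0\|^2 .
\end{equation*}
Because $\lambda\in[0,1)$, the right-hand side tends to zero. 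It is also uniformly bounded by a multiple of $\|x_0\|^2$, which gives Lyapunov stability. Together these give global asymptotic (indeed exponential) stability. The case $\lambda=0$ is covered by the same bound, since then $x_t=0$ for all $t\ge 1$.

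\emph{Equivalence via Schur complement.} The block matrix in \eqref{eq:contraction} is symmetric, and its $(1,1)$ block $P$ is positive definite. The standard Schur complement lemma then says the block matrix is positive semidefinite if and only if
\begin{equation*}
\lambda P - (A^\top P)P^{-1}(PA)\succeq 0 .
\end{equation*}
Simplifying $(A^\top P)P^{-1}(PA)=A^\top P A$ gives exactly $\lambda P - A^\top P A\succeq 0$, which is \eqref{eq:stability}.

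To keep the argument self-contained, I would verify the Schur step with the congruence
\begin{equation*}
\begin{bmatrix} I & 0\\ -A^\top & I\end{bmatrix}
\begin{bmatrix} P & PA\\ A^\top P & \lambda P\end{bmatrix}
\begin{bmatrix} I & -A\\ 0 & I\end{bmatrix}
=\begin{bmatrix} P & 0\\ 0 & \lambda P - A^\top P A\end{bmatrix}.
\end{equation*}
Congruence by an invertible matrix preserves semidefiniteness, and $P\succ 0$, so the block matrix is positive semidefinite exactly when $\lambda P - A^\top P A\succeq 0$.

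The only delicate point is remembering that the Schur step needs $P\succ 0$ (not merely $P\succeq 0$) to invert $P$. This is assumed in the statement, so I expect no real obstacle; the rest is routine.
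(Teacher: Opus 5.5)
Your proof is correct and is the standard argument the paper relies on. The paper states this lemma without proof, treating the Schur-complement equivalence as standard, and the one-step decrease $V(x^+)\le\lambda V(x)$ that you iterate is exactly the step it uses in the proof of the positive-invariance corollary that follows.

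One minor aside: your congruence never inverts $P$, so the equivalence of \eqref{eq:stability} and \eqref{eq:contraction} actually holds for any $P\succeq 0$. Strict positive definiteness is needed only for the lower bound $\lambda_{\min}(P)\|x\|^2\le V(x)$ in the stability part.
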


\begin{corollary}
    \label{corollary:positiveInvariance}
    Suppose $P$ is a solution to \eqref{eq:contraction}, then sublevel sets parametrized by level $c$, $\cP_c = \{x \mid x^\top P x \leq c\}$ are positively invariant sets \cite{blanchini1999set}.
\end{corollary}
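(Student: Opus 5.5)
The approach is to show that $V(x)=x^\top P x$ is non-increasing along trajectories of $x^+=Ax$. First, extract the Lyapunov inequality \eqref{eq:stability} from the matrix inequality \eqref{eq:contraction}. Then apply it directly to a point of $\cP_c$.

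\emph{Step 1: recover the decrease condition.} Since $P\succ 0$, the upper-left block of \eqref{eq:contraction} is invertible. The Schur complement of that block is
\[
\lambda P - A^\top P P^{-1} P A = \lambda P - A^\top P A .
\]
Because \eqref{eq:contraction} holds, this complement is positive semidefinite, i.e.\ $A^\top P A \preceq \lambda P$. This is exactly the equivalence stated in the preceding lemma, so it can simply be invoked.

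\emph{Step 2: invariance of the sublevel sets.} Take any $x\in\cP_c$ and let $x^+=Ax$. Evaluating the quadratic form of $A^\top P A \preceq \lambda P$ at $x$ gives
\[
(x^+)^\top P x^+ = x^\top A^\top P A x \le \lambda\, x^\top P x \le \lambda c .
\]
Since $\lambda\in[0,1)$ and $c\ge 0$ (for $c<0$ the set is empty, and invariance holds trivially), we have $\lambda c\le c$. Hence $x^+\in\cP_c$. Induction on the time index then shows $x_t\in\cP_c$ for all $t\ge 0$ whenever $x_0\in\cP_c$, which is positive invariance in the sense of \cite{blanchini1999set}.

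No step is a real obstacle. The only care needed is in handling the sign of $c$ and the semidefinite (rather than strict) Schur complement; both are dispatched by $P\succ 0$ and $\lambda<1$. In fact the argument gives the stronger statement $\cP_c$ is mapped into $\cP_{\lambda c}$, i.e.\ the sublevel sets are contractive.
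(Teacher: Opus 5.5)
Your proof is correct and follows the same route as the paper: use $A^\top P A \preceq \lambda P$ to get $V(x^+)\le \lambda V(x)\le \lambda c\le c$ for $x\in\cP_c$, then conclude by induction. The paper leaves two of your details implicit: the explicit Schur-complement recovery of the Lyapunov inequality from the block matrix inequality, and the observation that $c\ge 0$ is needed for $\lambda c\le c$ (for $c<0$ the set $\cP_c$ is empty).
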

\begin{proof}
    Let $x \in \cP_c$, so that $V(x) = x^\top P x \leq c$. Then
    \begin{equation}
        V(x^+) = x^\top A^\top P A x \leq \lambda x^\top P x = \lambda V(x) \leq \lambda c.    
    \end{equation}  
    Since $\lambda \in [0,1)$, $x^+ \in \mathcal{P}_c$. The result follows by induction, $\cP_c$ is positively invariant.
\end{proof}

\subsection{$\cS$-procedure and Farkas' Lemma} 
$\cS$-procedure \cite{polik2007survey} and Farkas' lemma \cite{miller2023data} are commonly used convex optimization tools due to their computational efficiency. They are both used to certify a set is contained by another set. While $\cS$-procedure is used for two ellipsoidal sets, Farkas' lemma can be seen as a counterpart for two polytopic sets. 
\begin{lemma}[$\mathcal S$-procedure]\label{lem:s_procedure}
Let
\[
    f_i(x)=x^\top F_i x+2q_i^\top x+c_i,\qquad i=1,2,
\]
where $F_i=F_i^\top$. A sufficient condition for
\begin{equation}
    f_1(x)\le 0 \implies f_2(x)\le 0
\end{equation}
is the existence of a scalar $\tau\ge0$ such that
\begin{equation}
    \tau
    \begin{bmatrix}
        F_1 & q_1\\
        q_1^\top & c_1
    \end{bmatrix}
    -
    \begin{bmatrix}
        F_2 & q_2\\
        q_2^\top & c_2
    \end{bmatrix}
    \succeq 0.
    \label{eq:s_procedure}
\end{equation}
If there exists $\bar x$ such that $f_1(\bar x)<0$, then, for a single quadratic constraint, this condition is also necessary.
\end{lemma}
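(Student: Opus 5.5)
The plan is to prove sufficiency by a direct quadratic-form argument in homogeneous coordinates. Necessity follows from the classical (lossless) S-lemma under the Slater-type hypothesis.

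\emph{Sufficiency.} Introduce the lifted vector $z=\bmat{x^\top & 1}^\top$. Each quadratic can then be written as
\[
f_i(x)=z^\top \bmat{F_i & q_i\\ q_i^\top & c_i} z .
\]
Assume \eqref{eq:s_procedure} holds for some $\tau\ge 0$. Multiplying it on the left by $z^\top$ and on the right by $z$ gives
\[
\tau f_1(x)-f_2(x)\ge 0,
\]
that is, $f_2(x)\le \tau f_1(x)$ for every $x$. If $f_1(x)\le 0$, then $\tau\ge0$ gives $\tau f_1(x)\le 0$, and hence $f_2(x)\le 0$. This step is routine.

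\emph{Necessity.} Assume $f_1(\bar x)<0$ for some $\bar x$ and that $f_1\le0\Rightarrow f_2\le0$. We must produce $\tau\ge0$ with $f_2-\tau f_1\le 0$ pointwise, and in fact in the homogeneous matrix sense \eqref{eq:s_procedure}.

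First we homogenize. Set $\tilde f_i(x,s)=x^\top F_i x+2s\,q_i^\top x+c_i s^2$. We show that $\tilde f_1(x,s)\le 0$ implies $\tilde f_2(x,s)\le 0$ for all $(x,s)$. For $s\neq0$, this follows by scaling to $x/s$. For $s=0$, it follows by a limiting argument: perturb $(x,0)$ toward $(\bar x,1)$ along a path on which $\tilde f_1<0$, which the strict feasibility of $\bar x$ allows, and then use continuity of $\tilde f_2$.

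Next we invoke convexity of the joint numerical range. By Dines' theorem, the image
\[
\{(\tilde f_1(w),\tilde f_2(w)) : w\in\mathbb{R}^{n_x+1}\}
\]
is a convex cone in $\mathbb{R}^2$. The implication says this cone does not meet the open set $\{(a,b): a\le 0,\ b>0\}$, or at least its interior. A separating hyperplane argument then gives $(\mu,\nu)\neq 0$ with $\mu,\nu\ge0$ such that $\nu\tilde f_2(w)\le \mu\tilde f_1(w)$ for all $w$. Strict feasibility ($\tilde f_1(\bar x,1)<0$) rules out $\nu=0$: it would force $\mu\tilde f_1\ge0$ everywhere with $\mu>0$, a contradiction. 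Dividing by $\nu$ gives $\tau=\mu/\nu\ge0$, and the homogeneous inequality over all $w$ is exactly the matrix inequality \eqref{eq:s_procedure}.

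The main obstacle is the necessity direction. Specifically, it relies on Dines' convexity theorem and on handling the $s=0$ boundary case in the homogenization. Rather than reprove these, I would cite the standard S-lemma survey \cite{polik2007survey} for them. Only the sufficiency half is needed for the convex constraint reformulations used later in the paper.
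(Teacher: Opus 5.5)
Your proposal is correct. The paper gives no proof of this lemma; it states it and defers to \cite{polik2007survey}. Your write-up is therefore more self-contained than the paper. The sufficiency half, evaluating the matrix inequality at $z=(x,1)$ to get $f_2\le\tau f_1$ pointwise, is complete. Your necessity half is the standard homogenization/Dines/separation argument from that same survey.

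Two details deserve tightening. First, the set $\{(a,b): a\le 0,\ b>0\}$ is not open. Separate the joint-range cone from the open quadrant $\{a<0,\ b>0\}$ instead. Because both sets are cones, the separating line passes through the origin, and that is what yields the homogeneous inequality $\nu\tilde f_2\le\mu\tilde f_1$ with $\mu,\nu\ge0$ and $(\mu,\nu)\neq0$. With this choice you only need $\tilde f_1(w)<0\Rightarrow\tilde f_2(w)\le0$. The $s=0$ case then follows from continuity alone, since $\tilde f_1(x,0)<0$ forces $\tilde f_1(x,s)<0$ for small $s\neq0$. The Slater point is then used only to exclude $\nu=0$.

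Second, if you keep the non-strict version, the perturbation needs a sign choice. Along $(x,0)+t(\bar x,1)$ you have $\tilde f_1\bigl((x,0)+t(\bar x,1)\bigr)=\tilde f_1(x,0)+2t\beta+t^2f_1(\bar x)$ with $\beta=\bar x^\top F_1x+q_1^\top x$. So $t$ must have sign opposite to $\beta$. Moving ``toward'' $(\bar x,1)$ with $t>0$ fails when $\tilde f_1(x,0)=0$ and $\beta>0$.
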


In this paper, we use the $\mathcal S$-procedure to certify containment between zero-centered ellipsoids. The following corollary is obtained by applying \Cref{lem:s_procedure} to
$f_i(x)=x^\top Q_i x-q_i$.

\begin{corollary}[Centered ellipsoidal containment]\label{cor:centered_ellipsoid_containment}
Let $Q_1,Q_2\in\mathbb R^{n\times n}$ be positive definite matrices and let 
$q_1,q_2>0$. Define
\begin{equation}
    \begin{aligned}
    \mathcal E_1=\{x\in\mathbb R^n \mid x^\top Q_1x\le q_1\},
    \\
    \mathcal E_2=\{x\in\mathbb R^n \mid x^\top Q_2x\le q_2\}.
    \end{aligned}
\end{equation}
Then $\mathcal E_1\subseteq\mathcal E_2$ if and only if there exists a scalar 
$\tau\ge0$ such that
\begin{subequations}
\begin{align}
    \tau Q_1-Q_2 &\succeq 0, \label{eq:centered_sproc_a}\\
    q_2-\tau q_1 &\ge 0. \label{eq:centered_sproc_b}
\end{align}
\end{subequations}
\end{corollary}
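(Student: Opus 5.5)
The plan is to apply \Cref{lem:s_procedure} to $f_1(x)=x^\top Q_1x-q_1$ and $f_2(x)=x^\top Q_2x-q_2$. In the notation of the lemma these have $F_i=Q_i$, linear term $0$, and constant term $-q_i$. The ellipsoid $\mathcal E_i$ is exactly the set $\{f_i\le 0\}$. The containment $\mathcal E_1\subseteq\mathcal E_2$ is therefore the implication $f_1(x)\le0\Rightarrow f_2(x)\le0$.

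With these data, the matrix inequality \eqref{eq:s_procedure} reads
\begin{equation*}
\begin{bmatrix}\tau Q_1-Q_2 & 0\\ 0 & -\tau q_1+q_2\end{bmatrix}\succeq 0 .
\end{equation*}
The matrix is block diagonal, so it is positive semidefinite if and only if both diagonal blocks are. This is precisely \eqref{eq:centered_sproc_a} together with \eqref{eq:centered_sproc_b}.

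For sufficiency, the lemma then gives $\mathcal E_1\subseteq\mathcal E_2$ directly. A one-line direct check is also available: for $x\in\mathcal E_1$,
\begin{equation*}
x^\top Q_2x\le\tau x^\top Q_1x\le\tau q_1\le q_2 .
\end{equation*}

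For necessity, I would invoke the last sentence of \Cref{lem:s_procedure}, which requires a strictly feasible point. Since $q_1>0$, the point $\bar x=0$ gives $f_1(0)=-q_1<0$. Hence some $\tau\ge0$ satisfying \eqref{eq:s_procedure} exists, and by the block-diagonal reduction above, (\ref{eq:centered_sproc_a})--(\ref{eq:centered_sproc_b}) hold.

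A self-contained necessity argument can serve as a backup. By homogeneity, containment is equivalent to $\lambda_{\max}$ of the generalized eigenproblem satisfying $\max_{x^\top Q_1x=1}x^\top Q_2x\le q_2/q_1$. Setting
\begin{equation*}
\tau=\lambda_{\max}\!\left(Q_1^{-1/2}Q_2Q_1^{-1/2}\right),
\end{equation*}
we get $\tau Q_1\succeq Q_2$ and $\tau q_1\le q_2$. This $\tau$ is nonnegative, indeed positive since $Q_2\succ0$.

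The only delicate point is the necessity direction: ensuring the strict feasibility (Slater) hypothesis of the lemma holds. It is immediate here because the ellipsoids are centered and $q_1>0$. The explicit eigenvalue choice of $\tau$ removes any reliance on the lossless S-lemma if desired.
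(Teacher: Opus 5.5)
Your proof is correct and follows the paper's route: the paper obtains this corollary by applying the $\mathcal S$-procedure lemma to $f_i(x)=x^\top Q_i x-q_i$. Your block-diagonal reduction of \eqref{eq:s_procedure} and the strict-feasibility check at $\bar x=0$ for necessity spell out the details the paper leaves implicit, and the explicit choice $\tau=\lambda_{\max}(Q_1^{-1/2}Q_2Q_1^{-1/2})$ is a valid self-contained supplement that removes reliance on the lossless S-lemma.
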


%\subsection{Farkas' Lemma}
\begin{lemma}[Farkas' Lemma]
Let $\cP_1 = \{x \mid G_1x \leq h_1\}$ and $\cP_2 = \{x \mid G_2x \leq h_2\}$ be two polytopes, If $\cP_1 \subseteq \cP_2$ then there is a nonnegative matrix $Z$ such that \cite{miller2023data}
\begin{equation}
    \label{eq:farkas_lemma}
    ZG_1 = G_2, \quad Zh_1 \leq h_2.
\end{equation}
\end{lemma}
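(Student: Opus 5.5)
The plan is to reduce the containment $\cP_1\subseteq\cP_2$ to one scalar linear-programming statement per row of $G_2$, and then apply LP strong duality to each. Write $g_{2,i}^\top$ for the $i$-th row of $G_2$ and $h_{2,i}$ for the $i$-th entry of $h_2$. Then $\cP_1\subseteq\cP_2$ holds exactly when, for every $i$, every $x$ with $G_1x\le h_1$ satisfies $g_{2,i}^\top x\le h_{2,i}$. Equivalently, the linear program
\begin{equation*}
p_i^\star=\max_x\ \{\, g_{2,i}^\top x \;\mid\; G_1x\le h_1 \,\}
\end{equation*}
satisfies $p_i^\star\le h_{2,i}$. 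We assume $\cP_1$ is nonempty, as is implicit in calling it a polytope; otherwise the statement can fail, e.g.\ an empty $\cP_1$ given by infeasible rows that do not span $g_{2,i}$.

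Since $\cP_1$ is a nonempty polytope, it is bounded, so the LP is feasible and bounded and $p_i^\star$ is finite. The dual LP is
\begin{equation*}
d_i^\star=\min_{z}\ \{\, h_1^\top z \;\mid\; G_1^\top z=g_{2,i},\ z\ge 0 \,\}.
\end{equation*}
By LP strong duality, a primal that is feasible with finite optimum forces the dual to be feasible and attain its optimum, with $d_i^\star=p_i^\star$. Hence there is $z_i\ge 0$ with $z_i^\top G_1=g_{2,i}^\top$ and $z_i^\top h_1=p_i^\star\le h_{2,i}$. This duality step is where the real content lies, and it could alternatively be derived from the classical Farkas alternative applied to the system $G_1x\le h_1,\ g_{2,i}^\top x>h_{2,i}$.

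Stacking the rows $z_i^\top$ into the matrix $Z$ gives a nonnegative $Z$ with $ZG_1=G_2$ and $Zh_1\le h_2$ componentwise, which is \eqref{eq:farkas_lemma}.

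The converse is immediate and worth recording, since it is the direction used for certification. If $Z\ge 0$ satisfies \eqref{eq:farkas_lemma} and $G_1x\le h_1$, then $G_2x=ZG_1x\le Zh_1\le h_2$, because multiplying an inequality by a nonnegative matrix preserves it.
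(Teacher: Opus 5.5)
Your proof is correct. The paper does not prove this lemma; it only cites \cite{miller2023data}. So there is no in-paper argument to compare against, and your row-by-row LP-duality argument is a self-contained replacement for that citation.

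What your route buys beyond the citation is precision about the hypotheses:
\begin{itemize}
\item \textbf{Nonemptiness is needed.} You are right that the statement requires $\cP_1\neq\emptyset$, which the paper's wording leaves implicit. For instance, $G_1=0$, $h_1=-1$ makes $\cP_1$ empty, while no $Z$ can satisfy $ZG_1=G_2\neq 0$.
\item \textbf{Boundedness is not needed.} You invoke boundedness of $\cP_1$ to get a finite LP value, but the containment $\cP_1\subseteq\cP_2$ already gives $p_i^\star\le h_{2,i}<\infty$. Your argument therefore holds for arbitrary nonempty polyhedra, not just polytopes.
\end{itemize}

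The second point matters for how the paper uses the lemma. In \eqref{eq:control_farkas}, the role of $\cP_2$ is played by $\{\vpsi_u \mid GK_u\vpsi_u\le g\}$. This set is unbounded whenever $K_u$ has a nontrivial kernel, which is the generic case since $K_u$ is wide. So the ``polytope'' hypothesis as stated is not actually met there, while your version covers that case.

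Your recorded converse ($Z\ge 0$, $ZG_1=G_2$, $Zh_1\le h_2$ imply containment) is the direction that justifies using \eqref{eq:control_farkas} as a sufficient certificate of input-constraint satisfaction. The direction stated in the lemma only shows that this certificate is not conservative.
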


\section{Koopman Control Parametrization}
\label{sec:kocop}
This section reviews the main theoretical derivations from initial versions of this work \cite{ondogan2025koopman}, with the introduction of explicit $K_Bu$ term governing the linear contribution of the control directly.
We consider the Koopman system of the form 
\begin{equation}
    \vpsi_x^+ = K_{xx} \vpsi_x + K_Bu + K_{xu}(S\vpsi_x \kron u),
    \label{eq:bilinear_koopman_model}
\end{equation}
describing \eqref{eq:ctrl_affine_system}, 
where $K_{xx}$ is the finite dimensional Koopman matrix governing the drift of the system, $K_B$ matrix introducing the linear control and $K_{xu}$ term governing the cross coupling terms introducing bilinearity to the system, and finally $S$ being a selection matrix that selects a subset of the features, addressing the dimensionality increase caused by the Kronecker product operation. We also introduce a feedback observable map $\vpsi_u:\mathbb R^{n_x}\to\mathbb R^{n_{\psi_u}}$ and seek a control law that is linear in these observables:

\begin{equation}
    \label{eq:linear_ctrl}
    u = K_u \vpsi_u    
\end{equation}

\begin{assumption}
\label{assumption:observable}
There exist a matrix $H$ such that we can model the bilinearity between selected features of $\vpsi_x$ and $\vpsi_u$ into a single linear operation     
    \begin{equation}
        S\vpsi_x \kron \vpsi_u = H\vpsi_x,   
        \label{eq:assumption1}
    \end{equation}
Furthermore, we assume there exist a matrix $C$ that constructs the linear relation
    \begin{equation}
        \vpsi_u = C\vpsi_x.   
        \label{eq:assumption2}
    \end{equation}
\end{assumption}

The assumption in \eqref{eq:assumption1} is not necessarily a linear approximation, it can be constructed as exact as demonstrated in the following example. 
\begin{example}
    $S = \bmat{1 & 0 & 0}$, $\vpsi_x = \bmat{x & x^2 & x^3}^\top$, $\vpsi_u = \bmat{x & x^2}^\top$, satisfies \eqref{eq:assumption1} with $H = \bmat{0 & 1 & 0 \\ 0 & 0 & 1}$ and satisfies \eqref{eq:assumption2} with $C = \bmat{1 & 0 & 0 \\ 0 & 1 & 0}$.
\end{example}

\begin{theorem}
    Under the assumptions \ref{eq:assumption1} and \ref{eq:assumption2}, there exists a closed loop linearized form
    \begin{equation}
        \label{eq:closed_loop_linear}
        \begin{aligned}
        \vpsi_x^+ &= \tilde K \vpsi_x,
        \end{aligned}
    \end{equation}
    where,
    \begin{equation}
        \label{eq:closed_loop_linear_matrix}
        \begin{aligned}
        \tilde K &= K_{xx} + K_BK_uC + K_{xu}(I \kron K_u)H.
        \end{aligned}
    \end{equation}
\end{theorem}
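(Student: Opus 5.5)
The plan is a direct substitution argument: insert the control law \eqref{eq:linear_ctrl} into the bilinear model \eqref{eq:bilinear_koopman_model} and use \Cref{assumption:observable} to rewrite every term as a linear map applied to $\vpsi_x$. Since \eqref{eq:bilinear_koopman_model} has three additive terms, I would treat them one at a time and then sum them.

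The drift term $K_{xx}\vpsi_x$ is already linear in $\vpsi_x$, so it stays as it is. For the linear input term I substitute $u = K_u\vpsi_u$ and then use \eqref{eq:assumption2} to get $K_B u = K_B K_u \vpsi_u = K_B K_u C \vpsi_x$.

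The bilinear term $K_{xu}(S\vpsi_x \kron u)$ is the only step that needs care. Substituting the control law gives $S\vpsi_x \kron (K_u \vpsi_u)$. I then apply \Cref{cor:kronecker_vectors} with $a = S\vpsi_x$, $M = K_u$ and $b = \vpsi_u$. This gives
\begin{equation*}
S\vpsi_x \kron (K_u\vpsi_u) = (I \kron K_u)(S\vpsi_x \kron \vpsi_u),
\end{equation*}
where $I$ is the identity of dimension equal to the number of rows of $S$. Invoking \eqref{eq:assumption1} replaces $S\vpsi_x \kron \vpsi_u$ by $H\vpsi_x$. The bilinear contribution therefore becomes $K_{xu}(I\kron K_u)H\vpsi_x$.

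Adding the three contributions and factoring out $\vpsi_x$ gives $\vpsi_x^+ = \tilde K \vpsi_x$, with $\tilde K$ as in \eqref{eq:closed_loop_linear_matrix}. Two points need checking along the way:
\begin{itemize}
\item The dimensions must be consistent. $H$ has $n_S n_{\psi_u}$ rows, where $n_S$ is the number of rows of $S$, and $I\kron K_u$ maps $\mathbb{R}^{n_S n_{\psi_u}}$ to $\mathbb{R}^{n_S n_u}$. This matches the column dimension of $K_{xu}$.
\item The order of the Kronecker factors matters. $K_u$ acts on the second factor, which is why the result is $I \kron K_u$ and not $K_u \kron I$.
\end{itemize}
Beyond this bookkeeping there is no real obstacle. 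The linearity holds exactly under the assumptions and is not an approximation.
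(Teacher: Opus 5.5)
Your proposal is correct and follows essentially the same route as the paper: substitute $u = K_u\vpsi_u$, use the mixed-product identity to write $S\vpsi_x \kron K_u\vpsi_u = (I \kron K_u)(S\vpsi_x \kron \vpsi_u)$, and then apply \eqref{eq:assumption1} and \eqref{eq:assumption2} to factor out $\vpsi_x$. Your checks on dimensions and on the order of the Kronecker factors are extra bookkeeping that the paper omits, and they are consistent with its result.
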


\begin{proof}
    The proof closely follows the derivation in \cite{ondogan2025koopman}. Substituting the control law \eqref{eq:linear_ctrl} and the assumption \eqref{assumption:observable} 
    into \eqref{eq:bilinear_koopman_model}
    \begin{equation}
    \begin{aligned}
        \vpsi_x^+ =& K_{xx}\vpsi_x + K_BK_u\vpsi_u + K_{xu}(S\vpsi_x \kron K_u \vpsi_u) \\
        &K_{xx}\vpsi_x + K_BK_u\vpsi_u + K_{xu}(I \kron K_u)(S\vpsi_x \kron \vpsi_u)\\
        &K_{xx}\vpsi_x + K_BK_uC\vpsi_x + K_{xu}(I \kron K_u)H\vpsi_x\\
        &\big(K_{xx} + K_BK_uC + K_{xu}(I \kron K_u)H\big)\vpsi_x
    \end{aligned}
    \end{equation}
    completes the proof.
\end{proof}

\begin{remark}
    \label{remark:Bu_term}
    While the prior work \cite{ondogan2025koopman} does not include an explicit linear input term $Bu$, instead, it represents linearity through the bilinear component by a constant observable $\vpsi_i = 1$. Let the block matrix $K_{xu}^{(1)}$ denote the block matrix that selects first $n_u$ columns of $K_{xu}$. This implies, $K_{xu}^{(1)}(1 \kron u) = K_{xu}^{(i)}u$ which implies $K_B = K_{xu}$. In contrast, this model considers $K_Bu$ term explicitly, and therefore does not require a constant observable, while including $K_Bu$ term keeps the equations clearer,  
    for representing direct linear control effects.
\end{remark}

\subsection{System Identification of the Koopman Control Parametrization}

In this section, we review an EDMD-like procedure to identify the system matrices 
$K_{xx}, K_B, K_{xu}, H, C,$ and $S$ from data. Using arbitrary controllers, we collect 
a dataset consisting of state-input pairs $(x_t, u_t)$, and define
\begin{equation}
    \mathcal{D} = \{(x_t, u_t)\}_{t=1}^N.    
\end{equation}
We denote the state dataset by
\begin{equation}
    \mathcal{D}_x = \{x_t\}_{t=1}^N,    
\end{equation}
and define the lifted datasets
\begin{equation}
    \mathcal{D}_{\vpsi_x} = \{\vpsi_x(x_t)\}_{t=1}^N, \quad
    \mathcal{D}_{\vpsi_u} = \{\vpsi_u(x_t)\}_{t=1}^N.    
\end{equation}
We further define the paired lifted dataset
\begin{equation}
    \mathcal{D}_{\vpsi} = \{(\vpsi_x(x_t), \vpsi_u(x_t))\}_{t=1}^N.    
\end{equation}
The matrix $C$ can be identified independently of the other matrices by solving the least-squares problem
\begin{equation}
    \min_C \sum_{(\vpsi_x, \vpsi_u) \in \mathcal{D}_{\vpsi}}
    \left\| \vpsi_u - C\,\vpsi_x \right\|^2.
\end{equation}
The variables $H$ and $S$ on the other hand, are coupled, therefore \eqref{eq:assumption2} cannot be solved directly via least squares fit. To address this, we define a binary vector $s \in \mathbb{R}^{n_{\vpsi_x}}$ that represents the selection matrix with an elementwise multiplication
\begin{equation}
    S\vpsi_x = s \odot \vpsi_x.
\end{equation}
Therefore, one can map $S$ to $s$ as
\begin{equation}
    s_j = \sum_i S_{i,j}
\end{equation}
With that we start by assuming $s = \mathbf{1}_{n_{\vpsi_x}}$. That is, initially all the observables are selected, yielding the problem of finding $H$:
\begin{equation}
    \min_H \sum_{(\vpsi_x, \vpsi_u) \in \cD}\|\vpsi_{x} \kron \vpsi_{u} - H\vpsi_{x}\|
\end{equation}
Now consider the $i$-th entry of the selection vector $s_i$, selecting the scalar $\vpsi_{x,i}$, and let $H^{(i)}$ be defined as a block matrix consisting of rows from $in_{\vpsi_x}$ to $(i+1)n_{\vpsi_x}$
with the corresponding expression:
\begin{equation}
    [\vpsi_x]_i \vpsi_u
    -
    H^{(i)}\vpsi_x,
    \label{eq:H_block}
\end{equation}

Note that \eqref{eq:H_block} decouples \eqref{eq:assumption1} into independent individual blocks where $i$-th entry is selected separately then other entries.

Now we turn off the term $s_i$ based on a threshold $\epsilon_{th}$:
\begin{equation}
s_i =
\begin{cases}
0, & \text{if }     [\vpsi_x]_i \vpsi_u
    -
    H^{(i)}\vpsi_x > \epsilon_{\text{th}}, \\
1, & \text{otherwise}.
\end{cases}
\end{equation}

Finally, we refit $H$ based on the new $s$:
\begin{equation}
    \min_H\sum_{(\vpsi_x, \vpsi_u) \in \cD} \|s \odot \vpsi_{x} \kron \vpsi_{u} - H\vpsi_{x}\|
    \label{eq:H_id}
\end{equation}

Once the matrices $C, H, S$ are identified, the rest become the identification of Koopman matrices $K_{xx}, B, K_{xu}$ from the data as
\begin{equation}
\begin{aligned}
\min_{K_{xx}, B, K_{xu}} \quad 
& \sum_{(\vpsi_x, \vpsi_u) \in \cD} 
\Big\| \vpsi_{x,t+1} \\
&\quad - \begin{bmatrix} K_{xx} & K_B & K_{xu} \end{bmatrix}
\begin{bmatrix}
\vpsi_{x,t} \\
u_t \\
S \vpsi_{x,t} \kron u_t
\end{bmatrix}
\Big\|.
\end{aligned}
\label{eq:koopman_id}
\end{equation}

\subsection{Lyapunov Function Selection and Control Design via Alternating Optimization for the Koopman Control Parametrization}
We consider a quadratic Lyapunov candidate function in the observables $V(\vpsi_x) = \vpsi_x^\top P \vpsi_x$ with $P \succ 0$. With the help of Schur complement, optimization problem can be formulated to stabilize the linear closed loop system equation \eqref{eq:closed_loop_linear}

\begin{equation}\label{eq:sdp}
  \begin{aligned}
    \min_{\lambda,K_u,P}\,& \lambda\\
    \subjectto&
                \begin{aligned}
                  &\bmat{P & P \tilde{K} \\
                  \tilde{K}^\top P & \lambda P}\succeq 0.\\
                  &\lambda\in[0,1)
                \end{aligned}
  \end{aligned}
\end{equation}

\begin{remark}
    Since the stability condition in \eqref{eq:sdp} includes the closed-loop matrix $\tilde K$, the assumptions \eqref{eq:assumption1}, \eqref{eq:assumption2} may contain modeling errors. In \Cref{sec:appendix} we include the validity of the Lyapunov condition under these modeling mismatches.
\end{remark}

However the problem \eqref{eq:sdp} is biconvex in this form as it contains the matrix variable $P$ in multiplications of variables $\lambda$ and $\tilde K$, where $\tilde K$ contains the optimization variable $K_u$ in it, as \eqref{eq:closed_loop_linear_matrix} illustrates. The proposed algorithm by \cite{ondogan2025koopman} tries random $P$ matrices until feasibility by the following method. Given a random matrix $R$, it generates a positive semidefinite matrix $R^\top R$. Later, it considers  the linear decoding operator $A_{\textrm{dec}}$ as,
\begin{equation}
    x = A_{\textrm{dec}}\vpsi_x
\end{equation}
It then proposes a projected matrix $A_{dec}^\top R^\top RA_{dec}$ that define a quadratic candidate Lyapunov function in the state space. Finally, a small $\epsilon$ is added on the diagonals to ensure positive definiteness, yielding the candidate
\begin{equation}
    P =A_{dec}^\top R^\top RA_{dec} + \epsilon I.
\end{equation}

with this candidate $P$, we obtain convex program
\begin{equation}\label{eq:sdp_Ku}
  \begin{aligned}
    \min_{\lambda,K_u}\,& \lambda\\
    \subjectto&
                \begin{aligned}
                  &\bmat{P & P \tilde{K} \\
                  \tilde{K}^\top P & \lambda P}\succeq 0.\\
                  &\lambda\in[0,1)
                \end{aligned}
  \end{aligned}
\end{equation}

\subsubsection{Alternate Convex Search}
\label{sec:alternatingConvexOptimization}
Once a feasible $K_u$, and $P$ pair is obtained by the procedure described above, we introduce an alternating method for addressing the biconvexity of the problem. In particular, we formulate the problem as an Alternate Convex Search (ACS) problem \cite{zhu2025disciplined}. We now solve for a new $P$ under the fixed $K_u$ and $\lambda'$, where $\lambda'$ is the new candidate decay constant with $\lambda' < \lambda$.

\begin{equation}\label{eq:sdp_P}
  \begin{aligned}
    \min_{P}\,& \lambda'\\
    \subjectto&
                \begin{aligned}
                  &\bmat{P & P \tilde{K} \\
                  \tilde{K}^\top P & \lambda' P}\succeq 0.\\
                \end{aligned}
  \end{aligned}
\end{equation}

The variable $\lambda'$ however, cannot be made an optimization variable since again it violates convexity due to term $\lambda'P$. Therefore we adopt a bisection method that exhaustively tries different $\lambda'$ candidates until a feasible solution is found. Once a feasible $P$ is found, we can alternately solve for \eqref{eq:sdp}. The procedure can be followed by alternating solutions of \eqref{eq:sdp} and \eqref{eq:sdp_P}. 

\begin{algorithm}[t]
\caption{Alternating Convex Optimization with Bisection on $\lambda$}
\label{alg:aco_lambda}
\begin{algorithmic}[1]
\State \textbf{Input:} Initial feasible $(K_u, P, \lambda)$, $N_{\textrm{outer}}, N_{\textrm{inner}}$ and threshold $\varepsilon$

\For{$k = 0,1,\dots,N_{\text{outer}}-1$}
    \State \textbf{Step 1 (Bisection on $\lambda$):}
    \State Initialize bounds:
    \[
    \lambda_{\min} \gets 0, \quad \lambda_{\max} \gets \lambda
    \]
    
    \For{$i = 0,1,\dots,N_{\text{inner}}-1$}
    
        \State $\lambda' \gets \frac{1}{2}(\lambda_{\min} + \lambda_{\max})$

        \State Solve SDP \eqref{eq:sdp_P} with $(K_u, \lambda')$
        
        \If{feasible}
            \State $\lambda_{\max} \gets \lambda'$
            \State \textbf{Lyapunov matrix update}:
            \State Update $P$ with the solution
        \Else
            \State $\lambda_{\min} \gets \lambda'$
        \EndIf
        
        \If{$|\lambda_{\max} - \lambda_{\min}| \leq \varepsilon$}
            \State \textbf{break}
        \EndIf
        
    \EndFor

    \State \textbf{Step 2 (Controller update):}
    \State Solve SDP \eqref{eq:sdp_Ku} to obtain $(K_u, \lambda)$
\EndFor
\State \textbf{Output:} $(K_u, P, \lambda)$
\end{algorithmic}
\end{algorithm}

\section{Control Under Input Constraints}
\label{sec:constraints}
While \eqref{eq:sdp} offers a control that stabilizes the system, 
this section is dedicated to provide satisfaction of control input limits, as long as lifted state remains in a predefined set. In this section we introduce two ways of incorporating actuator saturations into control by conditioning them to convex regions of states. Firstly, we consider ellipsoidal constraints via an application of the $\cS$-procedure  \cite{polik2007survey}; which introduces an additional semi-definite constraint. Secondly, we consider polytopic constraints via an application of the Farkas' lemma;  this formulation has the advantage of resulting in linear constraints that are computationally faster. 

We first develop the conditional relation between two sets. Given that a set in embedded space $\cP_{\vpsi}$, and a set in the input space, $\cP_u$, the task can be formulated as the satisfaction of the condition
\begin{equation}
    \vpsi_u\in\mathcal P_\psi \implies K_u\vpsi_u\in\mathcal P_u.
    \label{eq:set_containment}
\end{equation}

In many control applications, however, the admissible region is specified in the original state space rather than in the lifted observable space. That is, one is given a set $\mathcal{P}_x$ of allowable states and wishes to guarantee $u\in\mathcal{P}_u$ for all $x\in\mathcal{P}_x$.
Following this condition, we extend the method such that, given a set in state space $\cP_{x}$ (instead of in lifted space $\cP_{\vpsi}$), we ensure satisfaction of constraints in input space $\cP_u$.

\begin{example}
    Suppose a state space is given in $\mathbb{R}^2$, and we have a single lifting function $\vpsi(x_1, x_2):\mathbb{R} \times \mathbb{R} \to \mathbb{R}
    $, and together with the state, defining the Koopman observable vector $\vpsi_x(\vx) = \bmat{x_1 & x_2 & \vpsi(x_1, x_2)}^\top$. Additionally, suppose a $\cP_x $ be given as a polytope in the state space $\{x \mid Ax-b \leq 0\}$. Fig.~\ref{fig:lifting_polytope} demonstrates the representative polytope on a 2D surface in states $(x_1, x_2)$. The problem then becomes finding the polytope in the 3D space, including the datapoints of triplets $\big(x_1, x_2, \vpsi(x_1, x_2)\big)$.
\end{example}

\begin{figure}[h!]
    \centering
    \includegraphics[width=1\linewidth]{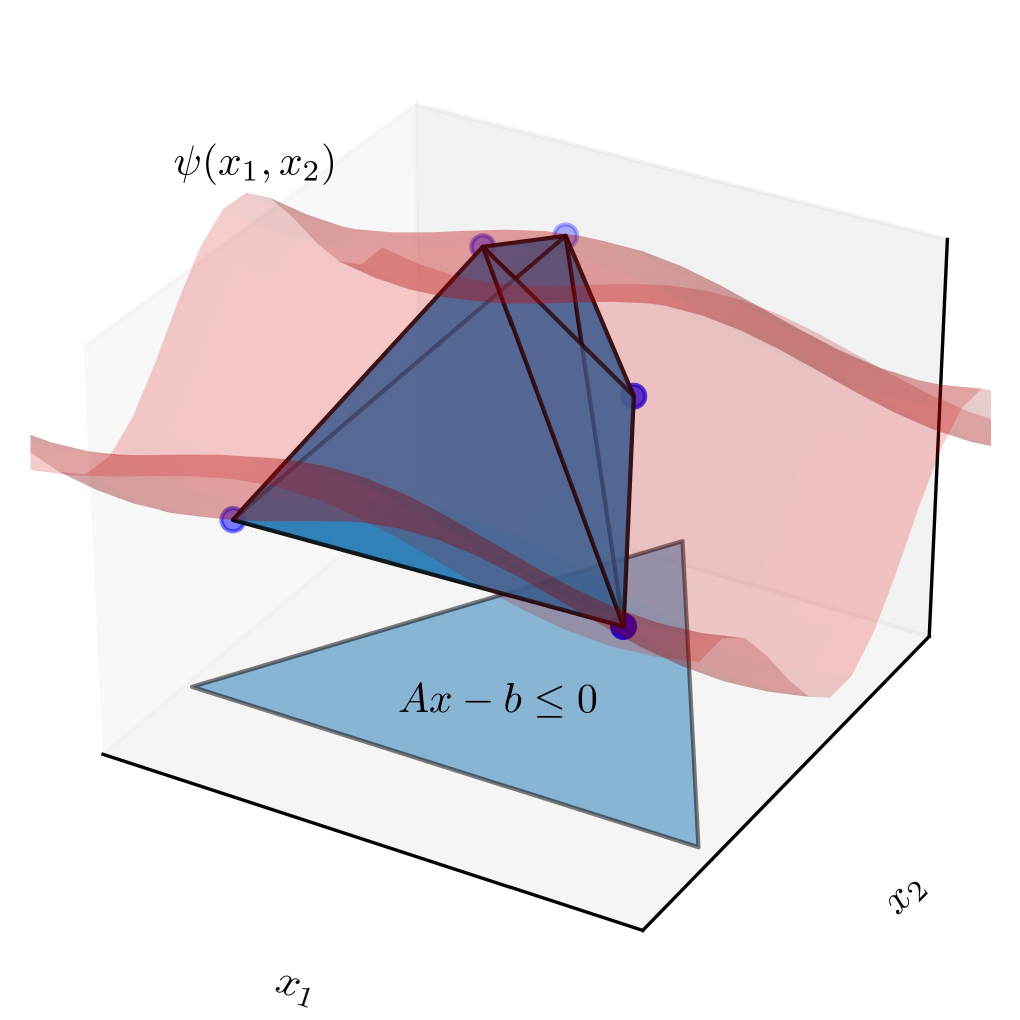}
    \caption{The procedure of finding polytope in the lifted space is exemplified in a 2D state-space with a single lifting function $\vpsi(x_1, x_2)$.}
    \label{fig:lifting_polytope}
\end{figure}

\subsection{Ellipsoidal Constraints}
Let two ellipsoids, parametrized by positive semidefinite matrices $Q_{\vpsi} \in \mathbb{R}^{n_{\vpsi_u} \times n_{\vpsi_u}}$ and $Q_u \in \mathbb{R}^{n_u \times n_u}$ and two positive scalars $q_u, q_{\vpsi} \in \mathbb{R}$ given as
$\cP_{\vpsi}(Q_{\vpsi},q_{\vpsi}) \defeq \{\vpsi_{u} \mid \vpsi_u^\top Q_{\vpsi} \vpsi_u \leq q_{\vpsi}\},$ and $\cP_u(Q_u, q_u) = \{u \mid u^\top Q_u u \leq q_u\}$, we seek to find a controller that is input-constrained, whenever the observables are constrained. In other words, under the controller $ K_u$, the relation \eqref{eq:set_containment} should be satisfied.

\subsubsection{Ellipsoidal Constraints in the Embedding Space}

The ellipsoidal constraint in the input space can equivalently be expressed as the following implication:
\begin{equation}
\vpsi_u^\top Q_{\vpsi} \vpsi_u \leq q_{\vpsi}
\quad\Longrightarrow\quad
\vpsi_u^\top K_u^\top Q_{u} K_u \vpsi_u \leq q_{u}.
\label{eq:ellipsoid_constraints}
\end{equation}

Applying the $\cS$-procedure, condition \eqref{eq:ellipsoid_constraints} implies that there exists a $\tau \geq 0$ such that
\begin{subequations}
\begin{equation}
    \tau Q_{\vpsi} - K_u^\top Q_{u} K_u \succeq 0
    \label{eq:S-proc_a} 
\end{equation}
\begin{equation}
    q_{u} - \tau q_{\vpsi} \geq 0.
    \label{eq:S-proc_b}
\end{equation}
\end{subequations}
Refactoring the first constraint using Schur complements, and incorporating it in the design problem we obtain the following optimization problem:
\begin{equation}
\label{eq:controller_ellipsoid_constraints}
\begin{aligned}
    \min_{\lambda, K_u, \tau, P} &\lambda \\
    \subjectto  &\bmat{P & P\tilde K \\
                  \tilde K^\top P & \lambda P}\succeq 0 \\
                &\bmat{\tau Q_{\vpsi} & K_u^\top Q_{u} \\
                  Q_{u} K_u & Q_{u}}  \succeq 0, \\
               &q_{u} - \tau q_{\vpsi} \geq 0.
\end{aligned}
\end{equation}
where the biconvexity is handled by ACS formulation as in \Cref{sec:alternatingConvexOptimization}

\subsubsection{Ellipsoidal Constraints in State Space}
While condition \eqref{eq:ellipsoid_constraints} provides the sets in the observable space, here we develop the same formulation where the set is given in the state space. Suppose that the constraint is defined as $x^\top Q_x x \leq q_x$, since the map $\vpsi_u(x)$ is nonlinear, it does not necessarily impose an ellipsoidal shape on the observables.
In that case, we propose the following procedure to design ellipsoidal constraints as an approximation:
given $x^\top Q_x x \leq q_x$ and the dataset that consists of states $\cD_x$, we consider the subset of points that are inside the ellipsoid as \eqref{eq:subset_dataset_ellipsoid}. Following that, we lift the points in this intersection, yielding a lifted dataset \eqref{eq:subset_dataset_ellipsoid_lifted}
\begin{subequations}
\begin{equation}
    \cD_x \cap \{x \mid x^\top Q_x x \leq q_x\}\label{eq:subset_dataset_ellipsoid},
\end{equation}
\begin{equation}
    \cD_{\vpsi_u} \cap \{\vpsi_u(x) \mid x^\top Q_xx \leq q_x\}. \label{eq:subset_dataset_ellipsoid_lifted}
\end{equation}
\end{subequations}
On the lifted dataset, we find a minimum-volume ellipsoid \cite{boyd2004convex} as:
\begin{equation}
\begin{aligned}
    \label{eq:MVE}
    Q_{\vpsi}^\star \in \argmax_{Q_{\vpsi} \succ 0} \quad & \log\det(Q_{\vpsi}) \\
    \subjectto \quad 
    & \vpsi_u^\top Q_{\vpsi} \vpsi_u \leq 1,
    \quad \\
    &\forall \vpsi_u \in \cD_{\vpsi_u} \cap \{\vpsi_u(x) \mid x^\top Q_xx \leq q_x\}.
\end{aligned}
\end{equation}
\begin{remark}
    $q_{\vpsi}$ is chosen as 1 to address the scale ambiguity.
\end{remark}

Consequently, we obtain the ellipsoidal constraint in the observable space as $\cP_{\vpsi}(Q_{\vpsi}^\star, 1)$ that is, the mimumum-volume ellipsoid that containts all the lifted datapoints.

\begin{remark}
    This construction provides an outer approximation of the lifted image of the admissible state set over the available dataset.
\end{remark}

\subsection{Polytopic Constraints}
\subsubsection{Polytopic constraints in embedding space}
We now consider the control input to be constrained by a polytope $\mathcal{U} = \{u \in \mathbb{R}^{n_u}| Gu \leq g\}$ whenever the observables is in a polytope $\mathcal{P} = \{\vpsi_u |A\vpsi_u  \leq b\}$,
then following \eqref{eq:farkas_lemma} there exists a non-negative matrix $Z\geq 0$ such that
\begin{equation}
\begin{aligned}
    G K_u &= Z A,\\
    Zb &\leq g;
    \end{aligned}
    \label{eq:control_farkas}
\end{equation}
this constraint is linear and convex. Adding \eqref{eq:control_farkas} to the design problem \eqref{eq:sdp} leads to the following optimization problem 
\begin{equation}
\label{eq:controller_polytopic_constraints}
\begin{aligned}
    \min_{\lambda, K_u, Z} &\lambda \\
    \subjectto & \bmat{\lambda P & \tilde K P \\ P \tilde K & P} \succeq 0 \\
               & G K_u = Z A \\
               & Zb \leq g \\
               & Z \geq 0
\end{aligned}
\end{equation}
Similarly, we can use alternating optimization to handle biconvexity.

\subsubsection{Polytopic Constraints in the State Space via Box Approximation}
Similarly, the polytopic constraints are often provided in state space. 

In the special case when the states are constrained within a box
$
\mathcal{P} = \{x \in \mathbb{R}^{n_x} \mid x_{i,\min} \le x_i \le x_{i,\max}, \; i = 1,\dots,n_x\},
$
the bounds on each feature $\vpsi_i(x)$ can be obtained by solving
\begin{equation}
\min_{x \in \mathcal{P}} \vpsi_i(x) \;\le\; \vpsi_i(x) \;\le\; \max_{x \in \mathcal{P}} \vpsi_i(x).
\end{equation}

If we have $\mathcal{P} = \{x:A_xx - b_x \leq 0\}$ using the data set $\cD_x$, we can conservatively define an approximate axis aligned box constraint,
\begin{equation}
\min_{x \in \cD_x \cap \cP} \vpsi_i(x) \;\le\; \vpsi_i(x) \;\le\; \max_{x \in \cD_x \cap \cP} \vpsi_i(x).
\label{eq:box_contraint}
\end{equation}

\subsubsection{Polytopic Constraints in the State Space via Robust Optimization}
While \eqref{eq:box_contraint} is conservative, as it fits a larger polytope than required, we define a robust linear programming based formulation to obtain a tighter polytope in the observable space from the data. 

Starting from the original datasets in state space $\cD_x$ and embedding space $\cD_{\vpsi_u}$, we first define the set of points of these datasets that lie inside of the polytope as
\begin{subequations}
   \begin{equation}
       \bar{\mathcal{D}}_x \defeq \mathcal{D}_x \cap \{x \mid A_xx -b_x \leq 0\},
   \end{equation} 
   \begin{equation}
       \bar{\mathcal{D}}_{\vpsi} \defeq \mathcal{D}_{\vpsi_u} \cap \{\vpsi_u(x) \mid A_xx - b_x \leq 0\}
   \end{equation}
    \label{eq:dataset_intersection}
\end{subequations}

 We consider the data matrices $X = \bmat{x_1, x_2 \dots x_N}$ generated by the dataset $\bar{\cD}_x$ and $\vpsi = \bmat{\vpsi_1, \vpsi_2, \dots \vpsi_N}$ generated by $\bar{\cD}_{\vpsi_u}$ correspondingly with the columns are aligned such that $\vpsi_u(x_i) = \vpsi_i$. 

To relate the state space and the observable space using only data, we adopt the following assumption, which enforces an approximate consistency of the lifting map over convex combinations of the data points.
\begin{assumption}
    $p \in \mathbb{R}^{N}$ with $p\geq0$ and  $\mathbf{1}^\top p = 1$, $\vpsi p \approx \vpsi(Xp)$ 
    \label{assumption:convex_comb}
\end{assumption}

\begin{theorem}[Robust Linear Program]
    Under the assumption \ref{assumption:convex_comb}, the constraints
    \begin{equation}
    \begin{aligned}
    & z_j^\top b + \lambda_j \leq g_j \\
    & X^\top A^\top z_j + \lambda_j \mathbf{1} \ge \vpsi^\top K_u^\top G_j^\top \\
    & z_j \ge 0.
    \label{eq:Robust_LP_constr}
    \end{aligned}
    \end{equation}

    implies the condition
    \begin{equation}
        A_xx - b_x \implies Gu - g
    \end{equation}
    where the input is controlled by \eqref{eq:linear_ctrl}.
\end{theorem}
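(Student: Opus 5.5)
The argument is a robust-LP/duality one: for each row $j$ of $G$, certify via LP weak duality that the worst case of $G_j u$ over the admissible set is at most $g_j$. Under \Cref{assumption:convex_comb}, any admissible state is represented through convex weights on the data, and the controlled input becomes linear in those weights.

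First I will parametrize the admissible states by $x = Xp$ with $p \ge 0$ and $\mathbf{1}^\top p = 1$. The corresponding observables are then, by \Cref{assumption:convex_comb}, $\vpsi_u(x)\approx \vpsi p$. The input generated by \eqref{eq:linear_ctrl} is therefore $u = K_u\vpsi p$, so that $G_j u = G_j K_u \vpsi p$. The state-space constraint $A_x x \le b_x$ becomes $A X p \le b$, writing $A,b$ for $A_x,b_x$ as in \eqref{eq:Robust_LP_constr}. The implication to prove is thus reduced, row by row, to the claim that
\[
\max_{p}\; G_jK_u\vpsi p \quad \text{s.t.}\quad AXp\le b,\; p\ge 0,\; \mathbf{1}^\top p = 1
\]
is at most $g_j$.

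Next I will take a feasible $p$ and use the constraints of \eqref{eq:Robust_LP_constr} directly. This avoids invoking strong duality; only the weak-duality direction is needed. Since $p\ge 0$, multiplying the second constraint on the left by $p^\top$ preserves the inequality:
\[
G_jK_u\vpsi p \le z_j^\top A X p + \lambda_j \mathbf{1}^\top p .
\]
Then $\mathbf{1}^\top p = 1$ turns the last term into $\lambda_j$. Because $z_j\ge 0$ and $AXp\le b$, we get $z_j^\top AXp \le z_j^\top b$. Finally, the first constraint gives $z_j^\top b+\lambda_j\le g_j$, so $G_j u\le g_j$. Repeating this for every row $j$ yields $Gu\le g$, which is the stated implication, read as $A_xx-b_x\le 0 \implies Gu-g\le 0$.

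The main obstacle is not the algebra, which is a routine weak-duality chain, but the representation step. I must argue that every $x$ with $A_xx\le b_x$ of interest lies in the convex hull of the columns of $X$. It also matters that the approximation in \Cref{assumption:convex_comb} is what transfers the certificate from $\vpsi p$ to the true observables $\vpsi_u(x)$. I will state explicitly that the guarantee holds for states in $\mathrm{conv}(\bar{\cD}_x)$, and that it is exact up to the error of \Cref{assumption:convex_comb}.

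As a remark, I will note that the constraints are linear in $(K_u,z_j,\lambda_j)$ and can be appended to \eqref{eq:sdp} without breaking convexity for fixed $P$.
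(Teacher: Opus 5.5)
Your proposal is correct and takes essentially the same route as the paper. Both reduce each row $j$ to the worst-case LP $\max_p G_jK_u\vpsi p$ over the simplex subject to $A_xXp\le b_x$, and both certify it with the dual variables $(z_j,\lambda_j)$. The paper derives the dual via the Lagrangian and then "drops the minimum"; you instead check weak duality directly by multiplying the dual-feasibility constraint by $p^\top\ge 0$, which is exactly the justification that step needs. Your caveat that the guarantee covers only $x\in\mathrm{conv}(\bar{\cD}_x)$, up to the error of the convex-combination assumption, is also implicit in the paper's proof, which fixes $Xp=x$ and $\vpsi p=\vpsi_u$.
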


\begin{proof}
The proof starts by considering the worst case $p$, that maximizes the multiplication in the each row $G_ju$ and makes sure that it remains within the polytope by comparing it to $g_j$, under the constraints. replacing $u$ with $K_u\vpsi_u$  
\begin{equation}
\begin{aligned}
\bmat{\max_p &G_j K_u\vpsi_u\\
\subjectto& p\transpose 1=1\\
& p\geq 0\\
& \vpsi p = \vpsi_u\\
& X p = x\\
&A_xx \leq b} \leq g_j,
\label{eq:conv_env}
\end{aligned}
\end{equation}

replacing the equalities constraints, we obtain

\begin{equation}
\begin{aligned}
\bmat{\max_p &G_j K_u\vpsi p\\
\subjectto& 
A_xXp \leq b_x \\
&p\transpose 1=1\\
& p\geq 0\\
} \leq g_j
\end{aligned}
\end{equation}

We introduce dual variables $z_j \ge 0$ associated with the inequality constraint $A_xXp \le b$, and $\lambda_j \in \mathbb{R}$ associated with the equality constraint $\mathbf{1}^\top p = 1$.

The Lagrangian is given by
\begin{equation}
\begin{aligned}
\mathcal{L}(p, z_j, \lambda_j)
= G_j K_u \vpsi p + z_j^\top (b_x - A_xXp) + \lambda_j (1 - \mathbf{1}^\top p).
\end{aligned}
\end{equation}

Rearranging terms, we obtain
\begin{equation}
\begin{aligned}
\mathcal{L}(p, z_j, \lambda_j)
&= \big( \vpsi^\top K^\top G_j^\top - X^\top A_x^\top z_j - \lambda_j \mathbf{1} \big)^\top p \\
&+ z_j^\top b_x + \lambda_j.
\end{aligned}
\end{equation}

For the supremum over $p \ge 0$ to be finite, the coefficient of $p$ must be non-positive, which yields the dual feasibility condition
\begin{equation}
X^\top A_x^\top z_j + \lambda_j \mathbf{1} \ge \vpsi^\top K_u^\top G_j^\top.
\end{equation}

Therefore, the dual problem is given by
\begin{equation}
\left[
\begin{aligned}
\min_{z_j,\lambda_j}\quad 
    & z_j^\top b_x + \lambda_j \\
\text{s.t.}\quad 
    & X^\top A_x^\top z_j + \lambda_j \mathbf{1}
      \ge \vpsi^\top K_u^\top G_j^\top, \\
    & z_j \ge 0
\end{aligned}
\right]
\leq g_j .
\end{equation}

If the minimum is smaller, it can be dropped, yielding the three constraints in \eqref{eq:Robust_LP_constr}, thus completing the proof.

\end{proof}

Finally, the constrained controller is obtained by solving 
\begin{equation}
\label{eq:controller_robust_lp}
\begin{aligned}
    \min_{\lambda, K_u, z_j} &\lambda \\
    \subjectto & \bmat{\lambda P & \tilde K P \\ P \tilde K & P} \succeq 0 \\
    & z_j^\top b + \lambda_j \leq g_j \\
    & X^\top A^\top z_j + \lambda_j \mathbf{1} \ge \vpsi^\top K_u^\top G_j^\top \\
    & z_j \ge 0.
\end{aligned}
\end{equation}

\section{Stabilization around a non-zero equilibrium}\label{sec:setpoint}
In this section, we augment the \kocop{} controller with a feedforward term to regulate the system around a given setpoint $x_t^{\textrm{ref}}$. Specifically, although \eqref{eq:linear_ctrl} offers a stabilizing controller around the origin, we extend our control law to  stabilize the error dynamics around an arbitrary setpoint.

\subsection{Error Dynamics Around a Nonzero Equilibrium}
We define the lifted terms corresponding to $\vpsi_x(x^{\textrm{ref}}_t) = \vpsi_{x,t}^{\textrm{ref}}$ and $\vpsi_u(x^{\textrm{ref}}_t) = \vpsi_{u,t}^{\textrm{ref}}$, and introduce the feedforward term $k_{\textrm{ff}}$, yielding the
feedback-feedforward control law
\begin{equation}
    u_t = K_u \big(\vpsi_{u,t} - \vpsi_{u,t}^{\textrm{ref}}\big) + k_{\textrm{ff}},
    \label{eq:feedforward_control_law}
\end{equation}

where $K_u$ is the precomputed controller that stabilizes the origin, but applied to the lifted error term $\vpsi_{u,t} - \vpsi_{u,t}^{\textrm{ref}}$, and the term $k_{\textrm{ff}}$ is chosen to provide the required steady-state input.
Such feedback-feedforward decompositions are standard for linear systems where they yield a linear error dynamics \cite{friedland2012control} \cite{limon2008mpc}. Here we derive an analogous construction to the bilinear Koopman systems.
We rewrite the closed loop bilinear dynamics under the controller \eqref{eq:feedforward_control_law} as
\begin{equation}
    \begin{aligned}
    \vpsi_{x, t+1} &= K_{xx}\vpsi_{x,t} + K_BK_u(\vpsi_{u,t}- \vpsi_u^{\textrm{ref}}) 
    \\&+ K_{xu}(S\vpsi_{x,t} \kron K_u(\vpsi_{u,t} - \vpsi_u^{\textrm{ref}})) \\&+K_B k_{\textrm{ff}} + K_{xu}(S \vpsi_x \kron k_{\textrm{ff}})
    \end{aligned} \label{eq:controlled_bilinear_system}
\end{equation}

We then define the error states $e_{x,t} = \vpsi_{x,t} - \vpsi^\textrm{ref}_{x,t}$ and $e_{u,t} = \vpsi_{u,t} - \vpsi^\textrm{ref}_{u,t}$. This subtraction, along with \eqref{eq:feedforward_control_law} yields the error equation for the next state as
\begin{equation}
\begin{aligned}
    e_{x, t+1} =&  
    K_{xx}\vpsi_{x,t} + K_BK_ue_{u,t} + K_{xu}(S\vpsi_x \kron K_u e_{u,t})\\
    &+ K_{xu}(S\vpsi_x \kron k_{\textrm{ff}}) + K_B k_{\textrm{ff}} - \vpsi_{x,t+1}^{\textrm{ref}}.
    \label{eq:error_next}
\end{aligned}
\end{equation}

\subsection{Stabilization Around a Known Constant Equilibrium}
Given a constant setpoint reference $\vpsi_x^{\textrm{ref}} \defeq \vpsi_{x,t}^{\textrm{ref}} = \vpsi_{x,t+1}^{\textrm{ref}}$, and its corresponding potentially unknown constant input $u^{\textrm{ref}}$, the steady-state Koopman dynamics satisfy
\begin{equation}
    \vpsi_x^{\textrm{ref}} = K_{xx} \vpsi_x^{\textrm{ref}} + K_Bu^{\textrm{ref}} + K_{xu}(S\vpsi_x^{\textrm{ref}} \kron u^{\textrm{ref}}). 
    \label{eq:reference_point}
\end{equation}

Throughout this section, we first assume the equilibrium input $u^{\textrm{ref}}$ is known for the corresponding setpoint $\vpsi^{\textrm{ref}}$, and derive the closed-loop dynamics. Then we extend it to the case when it is not known.

Using the equilibrium dynamics \eqref{eq:reference_point}, we replace the next reference observable $\vpsi_{x,t+1}^{\textrm{ref}}$
term in \eqref{eq:error_next}. This yields the error dynamics in the embedded space, centered around the constant reference point as
\begin{equation}
\begin{aligned}
\label{eq:error_dynamics_prelim}
e_{x,t+1} &=
K_{xx}e_{x,t} + K_BK_ue_{u,t}
\\&+ K_{xu}(S\vpsi_{x,t} \kron K_u e_{u,t}) 
\\&+ K_{xu}(S\vpsi_{x,t} \kron k_{\textrm{ff}})
- K_{xu}(S\vpsi^{\textrm{ref}}_{x} \kron u^{\textrm{ref}})
\\&+ K_Bk_{\textrm{ff}} - K_Bu^{\textrm{ref}}.
\end{aligned}
\end{equation}
To obtain an explicit form of the closed-loop error dynamics, we first add and subtract $K_{xu}(S\vpsi_x^{\textrm{ref}} \kron K_u \vpsi^{\textrm{ref}}_u)$, and rewrite the term $K_{xu}(S\vpsi_{x,t} \kron Ke_{x,t})$ as $K_{xu}(S\vpsi_{x,t} \kron K_u \vpsi_{u,t}) - K_{xu}(S\vpsi_{x,t} \kron K_u \vpsi_u^{\textrm{ref}})$
\begin{equation}
\begin{aligned}
e_{x,t+1} &=
K_{xx}e_{x,t} + K_BK_u(\vpsi_{u,t} - \vpsi_u^{\textrm{ref}})
\\&+ K_{xu}(S\vpsi_{x,t} \kron K_u \vpsi_{u,t}) - K_{xu}(S\vpsi_{x,t} \kron K_u \vpsi_u^{\textrm{ref}}) 
\\&+K_{xu}(S\vpsi_x^{\textrm{ref}} \kron K_u \vpsi_u^{\textrm{ref}}) - K_{xu}(S\vpsi_x^{\textrm{ref}} \kron K_u \vpsi_u^{\textrm{ref}})
\\&+ K_{xu}(S\vpsi_{x,t} \kron k_{\textrm{ff}})
- K_{xu}(S\vpsi^{\textrm{ref}}_{x} \kron u^{\textrm{ref}})
\\&+ K_Bk_{\textrm{ff}} - K_Bu^{\textrm{ref}}.
\end{aligned}
\end{equation}
We then use assumptions \eqref{eq:assumption1} and \eqref{eq:assumption2} to combine the terms $K_{xu}(S\vpsi_{x,t} \kron K_u \vpsi_{u,t})$ and $K_{xu}(S\vpsi_x^{\textrm{ref}} \kron K_u \vpsi_u^{\textrm{ref}})$ into $K_{xu}(I \kron K_u)H (\vpsi_{x,t} -\vpsi_x^{\textrm{ref}}) = K_{xu}(I \kron K_u)He_{x,t}$

\begin{equation}
\begin{aligned}
    e_{x,t+1} & =
    K_{xx}e_{x,t} + K_BK_uCe_{x,t} + K_{xu}(I \kron K_u)H e_{x,t}
    \\& - K_{xu}(S\vpsi_{x,t} \kron K_u \vpsi_x^{\textrm{ref}}) + K_{xu}(S_x\vpsi^{\textrm{ref}} \kron K_u \vpsi_u^{\textrm{ref}})  
    \\& - K_{xu}(S\vpsi^{\textrm{ref}}_{x} \kron u^{\textrm{ref}})
\\& - K_Bu^{\textrm{ref}} + K_Bk_{\textrm{ff}} + K_{xu}(S\vpsi_{x,t} \kron k_{\textrm{ff}}). 
\end{aligned}   
\end{equation}
We rewrite this expression using the closed-loop Koopman matrix $\tilde{K}$:
\begin{equation}
\begin{aligned}
    e_{x,t+1} & =
    \tilde K e_{x,t} \\& - K_{xu}(S\vpsi_{x,t} \kron K_u \vpsi_u^{\textrm{ref}}) + K_{xu}(S\vpsi_x^{\textrm{ref}} \kron K_u \vpsi_u^{\textrm{ref}}) \\& - K_Bu^{\textrm{ref}} - K_{xu}(S\vpsi^{\textrm{ref}}_{x} \kron u^{\textrm{ref}}_t) \\&  + K_Bk_{\textrm{ff}} + K_{xu}(S\vpsi_{x,t} \kron k_{\textrm{ff}}).
\label{eq:error_dynamics}
\end{aligned}   
\end{equation}
The terms $- K_{xu}(S\vpsi_{x,t} \kron K_u \vpsi_u^{\textrm{ref}}) + K_{xu}(S\vpsi_x^{\textrm{ref}} \kron K_u \vpsi_u^{\textrm{ref}})$ can be combined into $-K_{xu}(Se_{x,t} \kron K_u\vpsi_u^{\textrm{ref}})$. Using the identity \eqref{eq:kron_identity_to_matrix_vector}, this can then be converted to $-K_{xu}(I \kron K_u \vpsi_u^{\textrm{ref}})e_{x,t}$, indicating the linear relation in $e_{x,t}$. 

\begin{equation}
\begin{aligned}
    e_{x,t+1} & =
    \tilde K e_{x,t} - K_{xu}(I \kron K_u \vpsi_u^{\textrm{ref}})e_{x,t} \\&  - K_Bu^{\textrm{ref}} - K_{xu}(S\vpsi^{\textrm{ref}}_{x} \kron u^{\textrm{ref}}_t) \\&  + K_Bk_{\textrm{ff}} + K_{xu}(S\vpsi_{x,t} \kron k_{\textrm{ff}}).
\end{aligned}   
\end{equation}

We then aim to select $k_{\textrm{ff}}$ that cancels the last four residual terms with the choice of
\begin{equation}
    k_{\textrm{ff}} = u^{\textrm{ref}},
\end{equation}

and obtain
\begin{equation}
    \begin{aligned}
        e_{x,t+1} &=
        \tilde K e_{x,t} - K_{xu}(I \kron K_u \vpsi_u^{\textrm{ref}})e_{x,t} \\&\quad+ K_{xu}(Se_{x,t} \kron k_{\textrm{ff}}). 
    \end{aligned}
\end{equation}
We reuse the identity \eqref{eq:kron_identity_to_matrix_vector} to obtain a linear form in the error state for a fixed $k_{\textrm{ff}}$ 
\begin{equation}
\begin{aligned}
    e_{x,t+1} & =
    \tilde K e_{x,t} - K_{xu}(I \kron K_u \vpsi_u^{\textrm{ref}})e_{x,t} \\&\quad + K_{xu}(k_{\textrm{ff}} \kron I)Se_{x,t}.
\end{aligned}   
\end{equation}

We define new closed-loop Koopman matrix as
\begin{equation}
\begin{aligned}
    \bar K(\vpsi^{\textrm{ref}}, k_{\textrm{ff}}) &= \tilde K - K_{xu}(I \kron K_u \vpsi_u^{\textrm{ref}}) \\&\quad + K_{xu}(k_{\textrm{ff}} \kron I)S
    \label{eq:reference_dependent_closed_loop_matrix}
\end{aligned}
\end{equation}

Therefore we arrive at an expression where the error is linear in a reference- and feedforward term-dependent closed loop matrix \eqref{eq:reference_dependent_closed_loop_matrix}:
\begin{equation}
\begin{aligned}
    e_{x,t+1} &= \bar K(\vpsi_u^{\textrm{ref}}, k_{\textrm{ff}})e_{x,t}
\end{aligned}
\end{equation}

\begin{remark}
    Notice that due to bilinearity, for a fixed $\vpsi^{\textrm{ref}}$, $u^{\textrm{ref}}$ is in linear form.
    If $u^{\textrm{ref}}$ is unknown, it can be estimated from the inverse dynamics of the Koopman equation \eqref{eq:reference_point}. Using the identities above, the steady-state condition can be written as 
    \begin{equation}
        \hat u^{\textrm{ref}} = \big(K_B + K_{xu}T_{n_{\vpsi_x}, n_{u}}(I \kron S\vpsi_{x}^{\textrm{ref}})\big)^\dagger (I -K_{xx})\vpsi_{x}^{\textrm{ref}}
        \label{eq:u_ref_estimation}
    \end{equation}
\end{remark}

\begin{remark}
    In case of instability for a specific $\vpsi^{\textrm{ref}}, k_{\textrm{ff}}$ pair, the controller $K_u$ can be redesigned via modifying \eqref{eq:sdp} using the new closed loop Koopman matrix $\bar K(\vpsi^{\textrm{ref}})$
\begin{equation}\label{eq:sdp_barK}
  \begin{aligned}
    \min_{\lambda,K_u, P}\,& \lambda\\
    \subjectto&
                \begin{aligned}
                  &\bmat{P & P \bar K(\vpsi^{\textrm{ref}}, k_{\textrm{ff}}) \\
                  \bar K(\vpsi^{\textrm{ref}}, k_{\textrm{ff}})^\top P & \lambda P}\succeq 0.\\
                  &\lambda\in[0,1)
                \end{aligned}
  \end{aligned}
\end{equation}
\end{remark}

\begin{remark}
\label{remark constraints}
If the pair $(\vpsi^{\textrm{ref}}, u^{\textrm{ref}})$ is known prior to system identification, the equilibrium information can be incorporated directly into the identification procedure. Specifically, the Koopman model can be identified by solving \eqref{eq:koopman_id} subject to the equilibrium constraint \eqref{eq:reference_point}. This ensures that the identified bilinear Koopman model is exactly consistent with the known equilibrium, thereby eliminating the equilibrium mismatch between the learned model and the true system at this point. Consequently, controllers designed using the identified model can achieve reduced steady-state tracking errors.
\end{remark}

With this constraint, the known equilibrium is enforced exactly in the identified bilinear Koopman model, eliminating modeling error at the equilibrium and thereby reducing the resulting steady-state tracking error.

\subsection{Tracking of a Time-varying Trajectory with a CLF-based Method}\label{sec:clf_tracking}
While a constant feedforward term can stabilize the Koopman system around a constant setpoint, in this section, we extend the use of feedforward term to track a moving reference trajectory. While the feedforward term introduced in previous section was constant, with this CLF formulation, we achieve a time and state dependent selection of feedforward term.
Let $V(e_{x}) = e_x^\top P e_x = \|e_x\|^2_P$ be the CLF function, following \eqref{eq:error_dynamics}, our goal is to find the $k_{\textrm{ff}}$ such that

%\begin{equation}
%    \min_{k_{\textrm{ff}}} V(e_{x, t+1}) - \lambda V(e_{x,t})
%    \label{eq:clf}
%\end{equation}
We use a discrete time formulation of CLF \cite{agrawal2017discrete},
\begin{equation}
    \begin{aligned}
        \min_{k_\textrm{ff}} ~&\|k_\textrm{ff}\|^2_2 + \gamma \delta \\
        \subjectto &V(e_{x, t+1}) -\lambda V(e_{x,t}) \leq \delta \\
        &\delta \geq 0
    \end{aligned}
\end{equation} 
where $\delta$ is the relaxation variable if stability condition is infeasible in a certain timestep, as introduced in \cite{ames2019control}.

We reconsider the error dynamics in \eqref{eq:error_next}, while keeping the next-state reference observables explicitly as $\vpsi_{x,t+1}^{\textrm{ref}}$. This formulation does not require the reference to be a static equilibrium.

Due to bilinearity, for a fixed $\vpsi_{x,t}$ and $e_{x,t+1}$, this form becomes linear in $k_{\textrm{ff}}$, facilitating the control design. To show this, we use \eqref{eq:kron_identity_to_matrix_vector}
and \eqref{eq:commutation_matrix},
\begin{equation}
\begin{aligned}
    e_{x, t+1} =&  
    K_{xx}\vpsi_{x,t} + K_BK_ue_t + K_{xu}(S\vpsi_x \kron K_u e_t) - \vpsi_{x,t+1}^{\textrm{ref}}\\
    &+ \big(B + K_{xu}T_{n_u, n_{s}}(I\kron S\vpsi_x )\big) k_{\textrm{ff}}
    \label{eq:error_next_clf}\\
    \defeq& A_{\textrm{ff}}(\vpsi_{x,t})k_{\textrm{ff}} + b_{\textrm{ff}}(\vpsi_{x,t}),
\end{aligned}
\end{equation}
where 
\begin{subequations}
    \begin{equation}
        A_{\textrm{ff}}(\vpsi_{x,t}) \defeq B + K_{xu}T_{n_u, n_{s}}(I\kron S\vpsi_{x,t})
    \end{equation}
    \begin{equation}
        \begin{aligned}
        b_{\textrm{ff}}(\vpsi_{x,t}) &\defeq K_{xx}\vpsi_{x,t} + K_BK_ue_t \\&+ K_{xu}(S\vpsi_x \kron K_u e_t) - \vpsi_{x,t+1}^{\textrm{ref}}.
        \end{aligned}
    \end{equation}
\end{subequations}

\subsubsection{CLF-based Tracking under Ellipsoidal Constraints}
To ensure that the ellipsoidal actuation constraints are still satisfied after adding the feedforward term, the optimization becomes a quadratically constrained quadratic program (QCQP):
\begin{equation}\label{eq:clf_opt_ellipsoid}
  \begin{aligned}
        \min_{k_\textrm{ff}} ~&\|k_\textrm{ff}\|^2_2 + \gamma \delta \\
        \subjectto &V(e_{x, t+1}) -\lambda V(e_{x,t}) \leq \delta \\
        & (K_u e_{u,t} + k_{\textrm{ff}})^\top Q_{\vpsi} (K_u e_{u,t} + k_{\textrm{ff}}) < q_{\vpsi}\\
        &\delta \geq 0
  \end{aligned}
\end{equation}

\subsubsection{CLF-based Tracking under Polytopic Constraints}
Ensuring the polytopic constraints, the optimization becomes a quadratic program:
\begin{equation}\label{eq:clf_opt_polytope}
  \begin{aligned}
        \min_{k_\textrm{ff}} ~&\|k_\textrm{ff}\|^2_2 + \gamma \delta \\
        \subjectto &V(e_{x, t+1}) -\lambda V(e_{x,t}) \leq \delta \\
        & G(K_u e_{u,t} + k_{\textrm{ff}}) < g\\
        &\delta \geq 0
  \end{aligned}
\end{equation}

\begin{remark}
Equations \eqref{eq:clf_opt_ellipsoid}, \eqref{eq:clf_opt_polytope} replace the conditional relation in \eqref{eq:set_containment}, $\vpsi_u \in \cP_{\vpsi} \implies u \in \cP_u$ with the hard constraint $u \in \cP_u$; this results in a stronger condition, as the control input is now bounded the entire state space. 
\end{remark}

\begin{remark}
The formulation \eqref{eq:clf_opt_polytope} simply adds a linear constraint, yielding more computationally efficient formulation than its ellipsoid counterpart, which introduces a quadratic constraint. 
\end{remark}

\section{Summary of the Proposed Approach}
Summarizing the previous sections, the overall procedure for constructing the Koopman tracking controller is outlined in Algorithm \ref{alg:koopman_tracking_controller}. The main steps are: selecting a dictionary of observables, collecting a dataset of observables and inputs by exciting the system with a random controller, identifying the Koopman bilinear model, 
synthesizing a stabilizing controller, and computing the feedforward term required for stabilization of a fixed equilibrium or for trajectory tracking.

\begin{algorithm}[h]
\caption{Koopman Tracking Controller Design}
\label{alg:koopman_tracking_controller}
\begin{algorithmic}[1]

\State \textbf{Given:} state constraint set $\mathcal{P}_x$ and input constraint set $\mathcal{P}_u$.

\State \textbf{Select:} observable dictionaries $\vpsi_x(\cdot)$ and $\vpsi_u(\cdot)$ compatible with \eqref{eq:assumption2}.

\State \textbf{Collect data:} generate
$\mathcal{D}=\{(x_t,u_t,x_{t+1})\}_{t=1}^{N}$
using random control inputs, and form the corresponding lifted dataset
$\mathcal{D}_{\vpsi}$.

%\State \textbf{Shift observables:} set $\vpsi_u := \vpsi_x - \vpsi_x^{\textrm{ref}}$, $\vpsi_x(x) := \vpsi_u-\vpsi_u^{\textrm{ref}}$, so that $\bar{\vpsi}_x(x_{\textrm{eq}})=0$.

\State \textbf{Identify bilinear structure:} solve the least-squares problems
associated with \eqref{eq:assumption1} and \eqref{eq:assumption2} to obtain
$H$, $S$, and $C$.

\State \textbf{Identify Koopman bilinear model:} solve the constrained least-squares
problems in \eqref{eq:H_id} and \eqref{eq:koopman_id} to obtain
$K_{xx}$, $K_B$, and $K_{xu}$.

\State \textbf{Synthesize stabilizing controller:}
\If{$\mathcal{P}_u$ is ellipsoidal}
    \If{$\mathcal{P}_{\vpsi}$ is not available}
        \State Construct or approximate $\mathcal{P}_{\vpsi}$ from $\mathcal{P}_x$
        using \eqref{eq:MVE}.
    \State Solve the ellipsoidal constrained synthesis problem in \eqref{eq:controller_ellipsoid_constraints}.
    \EndIf
\ElsIf{$\mathcal{P}_u$ is polytopic}
    \If{$\mathcal{P}_{\vpsi}$ is not available}
        \State solve \eqref{eq:controller_robust_lp}.
    \Else
        \State Use the given lifted set $\mathcal{P}_{\vpsi}$, Solve the polytopic constrained synthesis problem in \eqref{eq:controller_polytopic_constraints}. 
    \EndIf 
\EndIf

\State \textbf{Compute feedforward term:}
\If{a constant setpoint $x^{\mathrm{ref}}$ is given}
    \State Compute the constant feedforward term $k_{\mathrm{ff}}$ using \eqref{eq:u_ref_estimation}.
\ElsIf{a reference trajectory $\{x_t^{\mathrm{ref}}\}$ is given}
    \State Compute the time-varying feedforward term $k_{\mathrm{ff},t}$ by solving the CLF optimization problem in \eqref{eq:clf_opt_ellipsoid} or \eqref{eq:clf_opt_polytope}, depending on the input constraint representation.
\EndIf

\State \textbf{Apply controller:}
\[
    u_t =
    K_u\big(\vpsi_u(x_t)-\vpsi_u(x_t^{\mathrm{ref}})\big)
    +
    k_{\mathrm{ff},t}.
\]

\State \textbf{Output:} Koopman bilinear model, constrained stabilizing controller,
and feedforward tracking controller.
\end{algorithmic}
\end{algorithm}
%First, after selecting the nonlinear observable functions $\vpsi_x$ and $\vpsi_u$, we redefine the observables by shifting the lifted state relative to the reference so that the equilibrium corresponds to the origin in the lifted space. This is done by subtracting the observable evaluated at the equilibrium point,
%\begin{align}
%    \vpsi_x \coloneqq \vpsi_x - \vpsi_x^{\textrm{ref}},\\
%    \vpsi_u \coloneqq \vpsi_u - \vpsi_u^{\textrm{ref}}.    
%\end{align}
%This transformation ensures that the closed-loop bilinear system in \eqref{eq:closed_loop_linear} has an equilibrium at the origin of $\mathbb{R}^{n_{\vpsi_x}}$.
After identifying the model, we compute a stabilizing controller by solving the semidefinite program with polytopic and/or ellipsoidal constraints described in \eqref{eq:controller_polytopic_constraints} and \eqref{eq:controller_ellipsoid_constraints}.

Once the closed loop Koopman model is known, we design feedforward controller term that stabilize the system around a constant setpoint as in \eqref{eq:feedforward_control_law}, or given a reference trajectory, we can compute a time-varying feedforward term using CLF formulation for tracking as in \eqref{eq:clf_opt_ellipsoid} and \eqref{eq:clf_opt_polytope}.

\section{Experiments and Results on Dynamical Systems}\label{sec:experiments}
In this section, we describe the dynamical models, datasets, lifting functions, and constraint sets used to evaluate the proposed approach.
We then report tracking performance and compare the proposed controllers with several Koopman MPC baselines.

We evaluate our method on the following dynamical systems.

\begin{figure*}[t!]
    \centering
    \subfloat[Same initial condition: phase portrait]{%
        \includegraphics[width=0.48\textwidth]{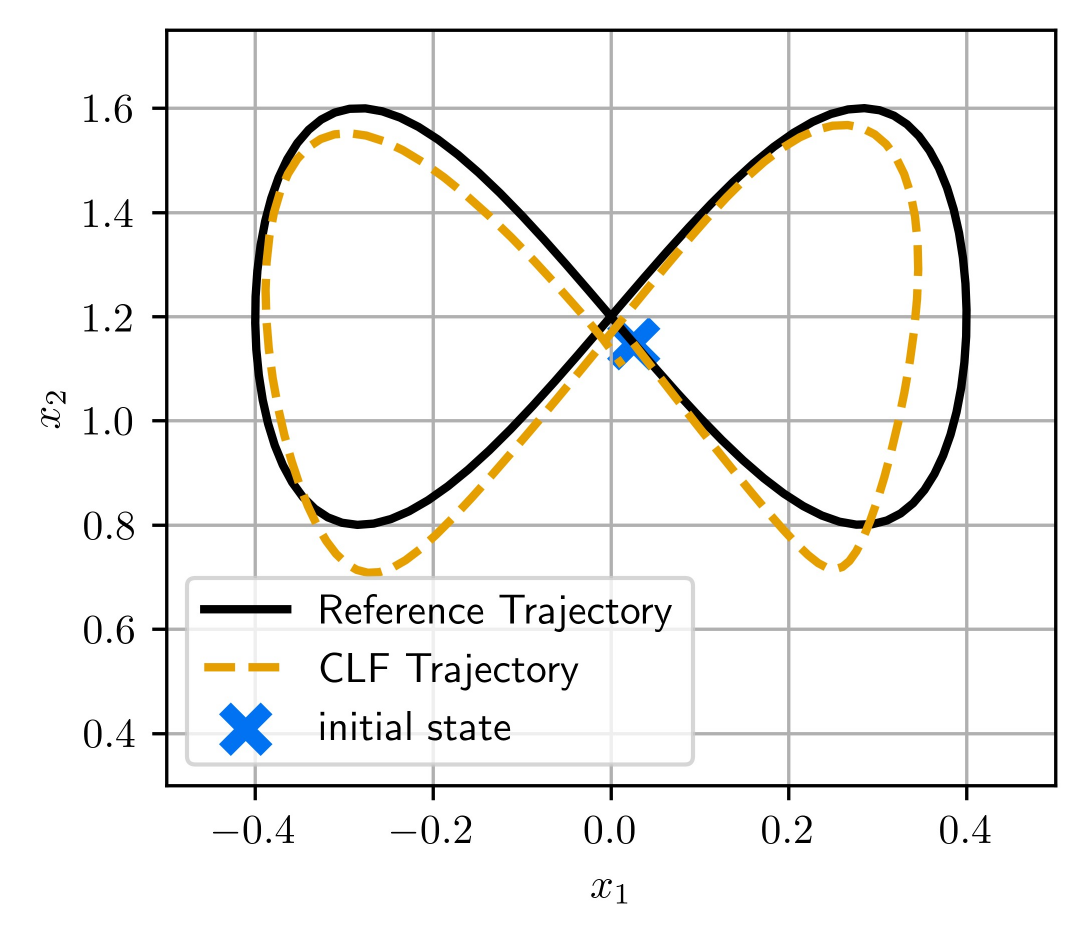}%
        \label{fig:Tracking_double_same_init}
    }\hfill
    \subfloat[Different initial condition: phase portrait]{%
        \includegraphics[width=0.48\textwidth]{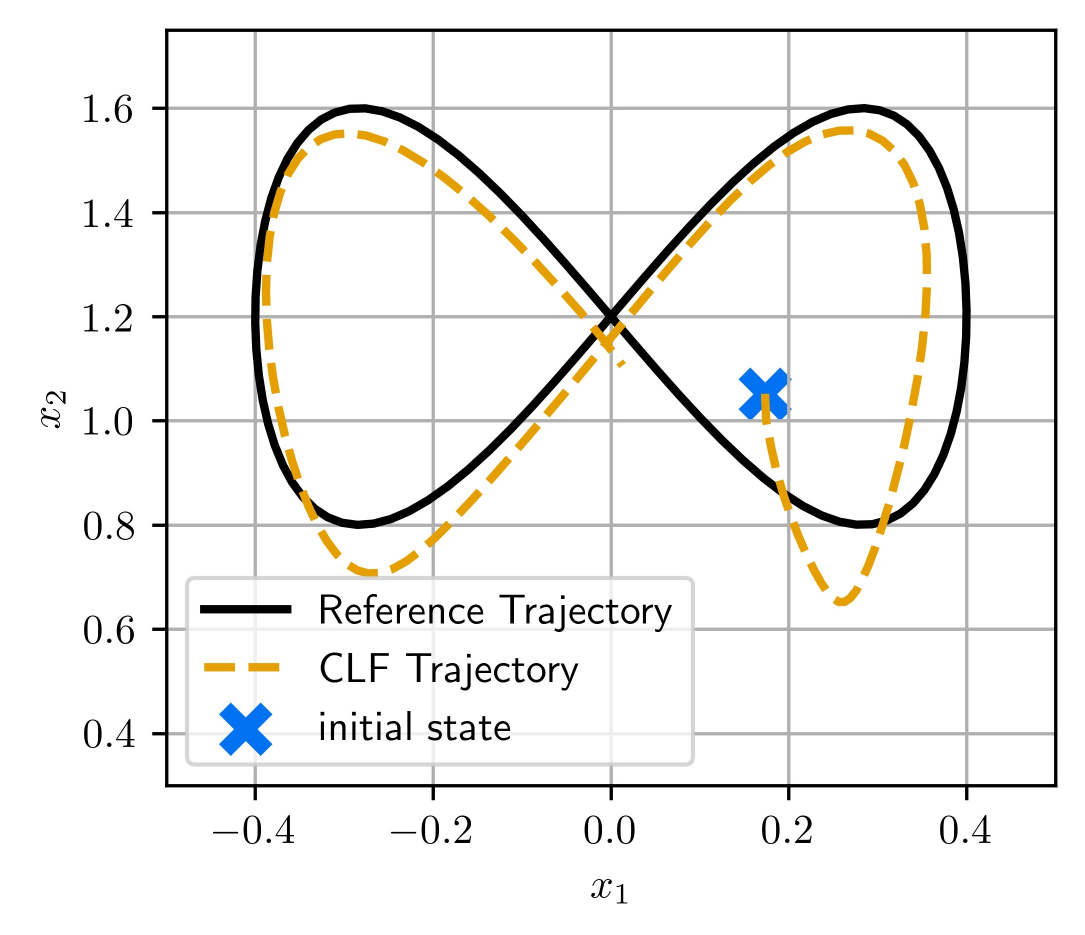}%
        \label{fig:Tracking_double_diff_init}
    }

    \vspace{0.5em}
    \subfloat[Same initial condition: control input]{%
        \includegraphics[width=0.48\textwidth, trim = 0em 0em 0em 3em, clip]
        {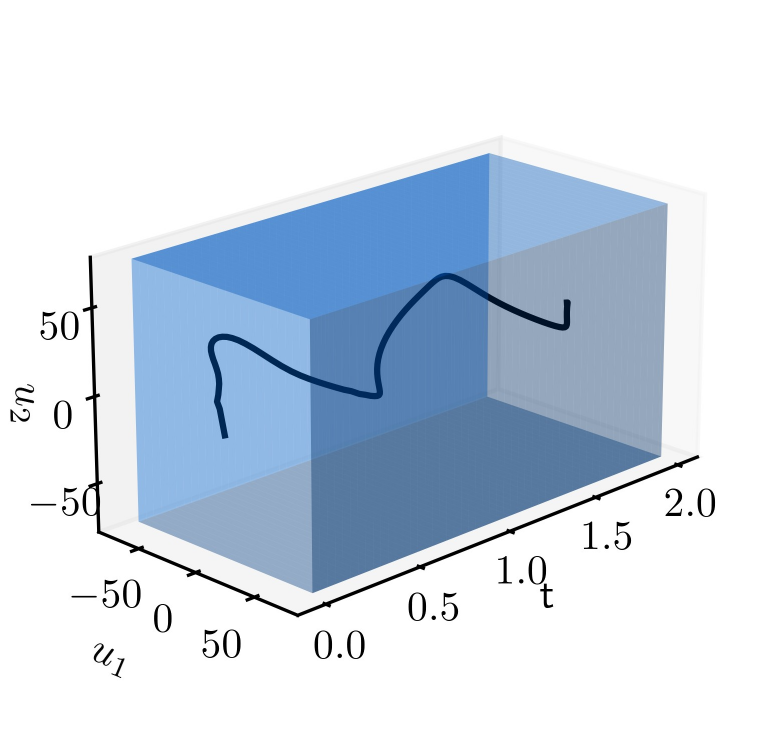}%
        \label{fig:Tracking_double_same_force}
    }\hfill
    \subfloat[Different initial condition: control input]{%
        \includegraphics[width=0.48\textwidth, trim = 0em 0em 0em 3em, clip]{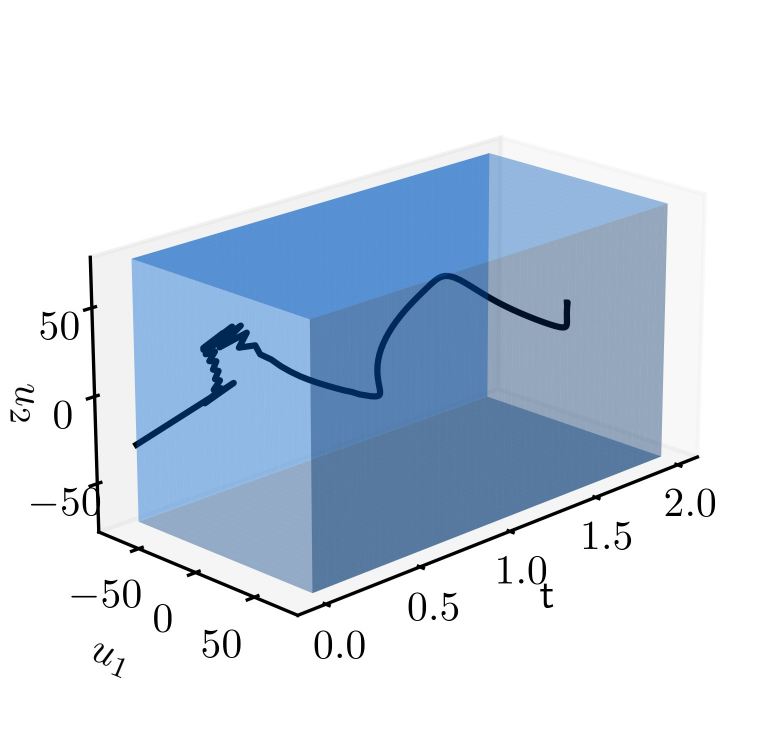}%
        \label{fig:Tracking_double_diff_force}
    }
    \caption{Double pendulum tracking results under the CLF-KoCoP controller. }
    \label{fig:tracking_double}
\end{figure*}

\begin{figure*}[h!]
    \centering
    \subfloat[Same initial condition: phase portrait]{%
        \includegraphics[width=0.48\textwidth]{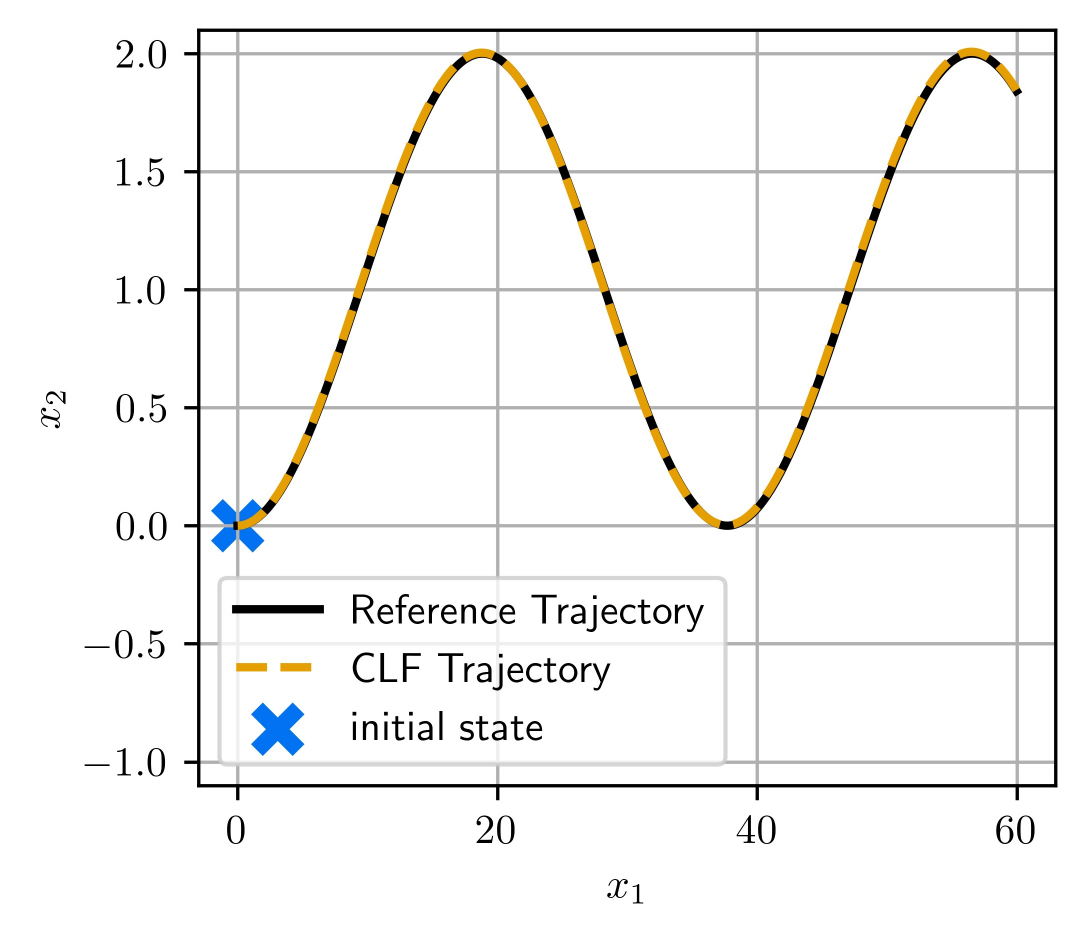}%
        \label{fig:Tracking_uni_same_init}
    }\hfill
    \subfloat[Different initial condition: phase portrait]{%
        \includegraphics[width=0.48\textwidth]{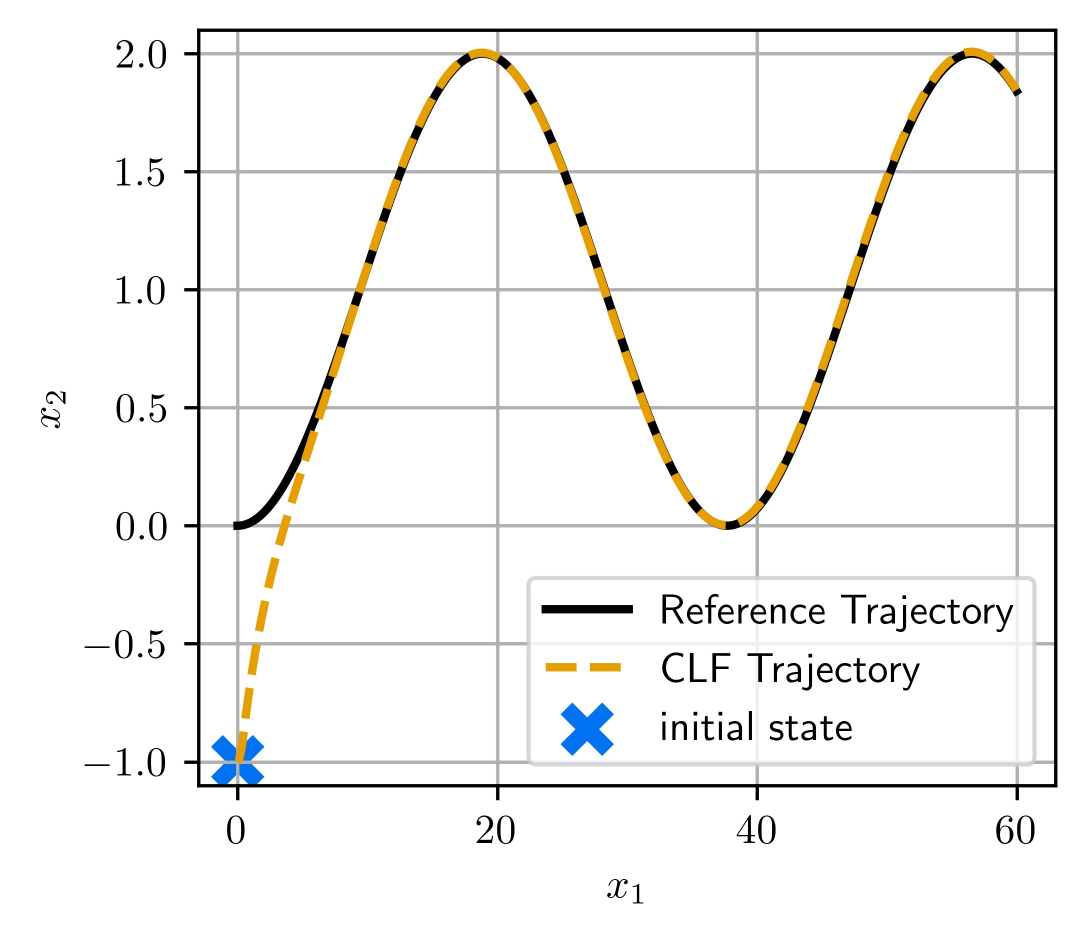}%
        \label{fig:Tracking_uni_diff_init}
    }
    \vspace{0.5em}
    \subfloat[Same initial condition: control input]{%
        \includegraphics[width=0.48\textwidth, trim = 0em 0em 0em 3em, clip]{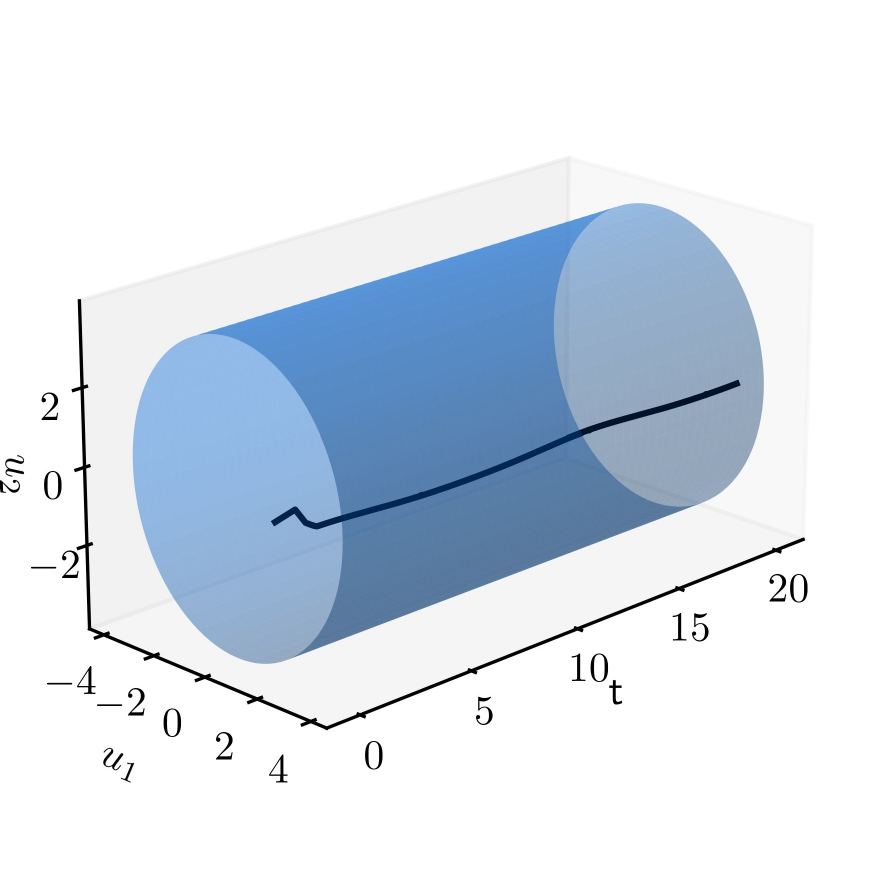}%
        \label{fig:Tracking_uni_same_force}
    }\hfill
    \subfloat[Different initial condition: control input]{%
        \includegraphics[width=0.48\textwidth, trim = 0em 0em 0em 3em, clip]{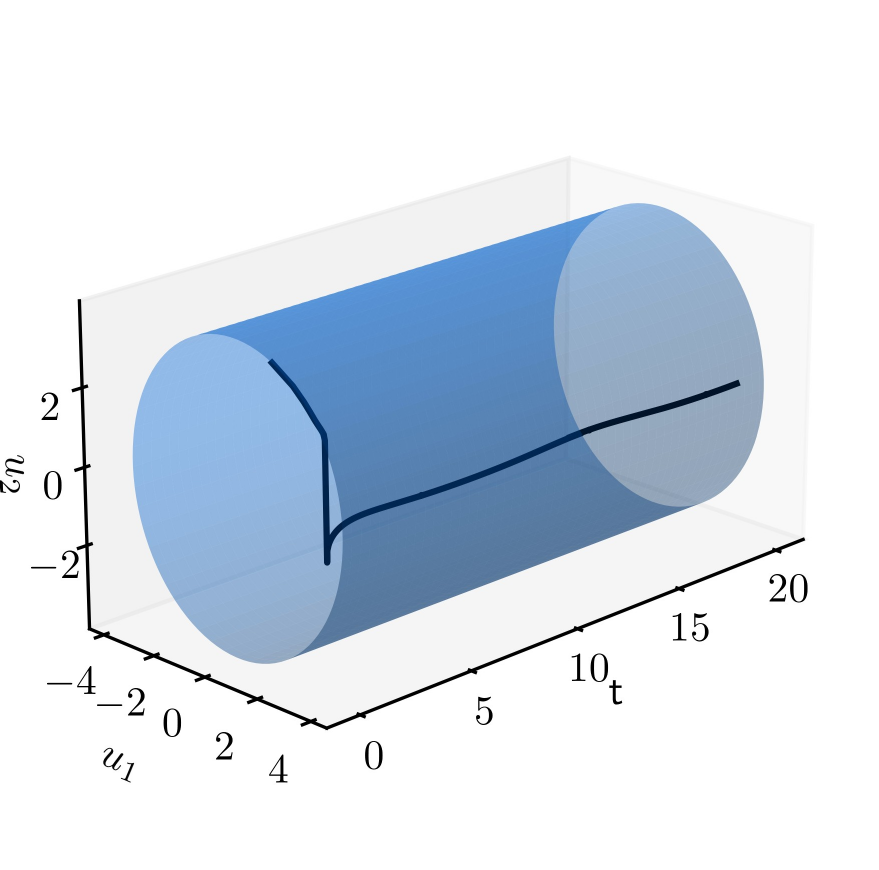}%
        \label{fig:Tracking_uni_diff_force}
    }
    
    \caption{Unicycle tracking results under the CLF-KoCoP controller.}
    \label{fig:tracking_uni}
\end{figure*}

\subsubsection{Inverted Double Pendulum} We denote the state of the double pendulum, consisting of the two link angles and their
angular velocities, as
\begin{equation}
    x = \bmat{\theta_1 & \theta_2 & \dot\theta_1 & \dot\theta_2}^{\top}, \qquad u = \bmat{\tau_1 & \tau_2}^{\top}.
\end{equation}
The dynamics are written as
\begin{equation}
    \dot{x} =
    \bmat{
        \dot\theta_1 \\
        \dot\theta_2 \\
        M(\theta)^{-1}
        \big(u + D(\dot\theta)-C(\theta,\dot\theta)-G(\theta)\big)
    },
    \label{eq:double_pendulum_dynamics}
\end{equation}
where $\theta=\bmat{\theta_1 & \theta_2}^{\top}$,
$\dot\theta=\bmat{\dot\theta_1 & \dot\theta_2}^{\top}$, and
\begin{equation}
    M(\theta) =
    \bmat{
        (m_1+m_2)l_1^2 & m_2l_1l_2\cos\theta_r \\
        m_2l_1l_2\cos\theta_r & m_2l_2^2
    }, \qquad \theta_r:=\theta_2-\theta_1 .
    \label{eq:double_pendulum_mass}
\end{equation}

The remaining terms are

\begin{equation}
\begin{aligned}
    C(\theta,\dot\theta)
    &=
    \bmat{
        -h(2\dot\theta_1\dot\theta_2+\dot\theta_2^2) \\
        h\dot\theta_1^2
    },
    \qquad
    h=m_2l_1l_2\sin\theta_r,\\
    G(\theta)
    &=
    \bmat{
        (m_1+m_2)gl_1\sin\theta_1 \\
        m_2gl_2\sin\theta_2
    },
    \qquad
    D(\dot\theta)
    =
    -b
    \bmat{
        \dot\theta_1 \\
        \dot\theta_2
    }.
\end{aligned}
\label{eq:double_pendulum_terms}
\end{equation}

\noindent While the selection of
observable functions is beyond the scope of this work, we use observables motivated by the analytical structure of the double pendulum dynamics:
\begin{equation}
{\vpsi}_{x}(\vx) =
\begin{bmatrix}
  \vx \\
  \sin (\theta_1) \\
  \cos(2\theta_r) \\
  \dot\theta_1^2 \sin (\theta_r) \\
  \dot\theta_2^2 \sin (\theta_r) \\
  \dot\theta_1^2 \sin (2\theta_r) \\
  \dot\theta_2^2 \sin (2\theta_r) \\
  \sin(\theta_r) \cos(\theta_2) \\
  \sin(2\theta_1 - \theta_2) \\
  \sin (\theta_1 - 2\theta_2)
\end{bmatrix},
{\vpsi}_{u}(\vx) =
\begin{bmatrix}
  \vx \\
  \sin (\theta_1) \\
  \cos (2\theta_r) \\
\end{bmatrix}.
\label{observables_double_pendulum}
\end{equation}
where $\theta_r = \theta_2 - \theta_1$ defined as relative angle between the links, and $\vpsi_x \in \mathbb{R}^{13}, \vpsi_u \in \mathbb{R}^{6}$
capturing the underlying dynamics of the double pendulum through trigonometric terms and their combinations of velocities.

We collect a dataset from temporal snapshots of the dynamics, containing $900$ trajectories, each having length of $100$ samples with the sampling rate of $T_s = 0.02$s. The initial states are sampled uniformly with the range $\theta_{1,2} \in [-5, 5]$ for the link angles and $\dot \theta_{1,2} \in [-\pi, \pi]$ for link velocities. Let the state be $\mathbf{x} = \bmat{\theta_1 & \theta_2 & \dot \theta_1 & \dot \theta_2 }^\top$. 

We define the state-space polytope over which the input constraints are enforced
according to \eqref{eq:set_containment} as the box
\begin{equation}
    A_x =
    \begin{bmatrix}
        I_{4} \\ -I_{4}
    \end{bmatrix},
    \qquad
    b_x = \mathbf{1}_{8},
\end{equation}
and impose the input constraints
\begin{equation}
    G_u =
    \begin{bmatrix}
        I_{2} \\ -I_{2}
    \end{bmatrix},
    \qquad
    h_u = 100\,\mathbf{1}_{4}.
\end{equation}

\subsubsection{Unicycle Model}

Let $[p_x\;p_y\;\theta]^\top$ denote the unicycle state, consisting of position and orientation of the robot. The dynamics are
\begin{equation}
    \bmat{\dot p_x\\ \dot p_y \\ \dot \theta } = \bmat{\cos \theta & 0 \\ \sin \theta & 0 \\ 0 & 1}\bmat{v \\ \omega}.    
\end{equation}

We define the observable functions
\begin{equation}
    \begin{aligned}
    \vpsi_x(x) &=
    \begin{bmatrix}
        p_x & p_y & \theta & \cos\theta & \sin\theta & e_\theta
    \end{bmatrix}^{\top},\\
    \vpsi_u(x) &=
    \begin{bmatrix}
        e_\theta & r
    \end{bmatrix}^{\top},
    \label{eq:unicycle_observables}
    \end{aligned}
\end{equation}
where heading angle error and distance from the center are defined as
\begin{equation}
    e_\theta := \operatorname{atan2}(-y,-x)-\theta,
    \qquad
    r := \sqrt{x^2+y^2}.
    \label{eq:unicycle_aux}
\end{equation}
Overall the lifting uses position, orientation, sinusoidal functions of the orientation, and heading error in the state observables, while the feedback observables use the heading error and distance from the origin.

Using this model, we generate a dataset consisting of $600$ different initial conditions sampled on a grid, and simulate each trajectory for $100$ time steps, yielding a dataset of size $60000$. The sampling time for this system is $T_s = 0.01$.

For the state region, we use the unit ball
\begin{equation}
    Q_x = I_{3 \times 3}, q_x = 1
\end{equation}
and for the input constraint set, we use
\begin{equation}
    Q_u = I_{2 \times 2}, q_u = 16
\end{equation}

The controllers are evaluated on both setpoint regulation and trajectory-tracking tasks, following the formulations in \cref{sec:setpoint,sec:clf_tracking}. We compare the proposed CLF-based method with Koopman MPC baselines in terms of runtime and trajectory mean-squared error relative to the reference. We further evaluate the robustness of the proposed method under state noise and input disturbances.

\subsection{Setpoint Tracking Performance}
To evaluate the setpoint regulation capability of the constrained \kocop{} controller, we stabilize the system around a nonzero equilibrium $\bmat{1 & 1 & 0 & 0}$. The corresponding feedforward term is computed using \eqref{eq:u_ref_estimation}. 

As a comparison, we also apply the CLF-based tracking controller by treating the same
constant setpoint as a reference trajectory. The resulting time responses are compared in
Fig.~\ref{fig:CLFvsErrorDynamics}.

 \begin{figure}[h!]
     \centering \includegraphics[width=1\linewidth]{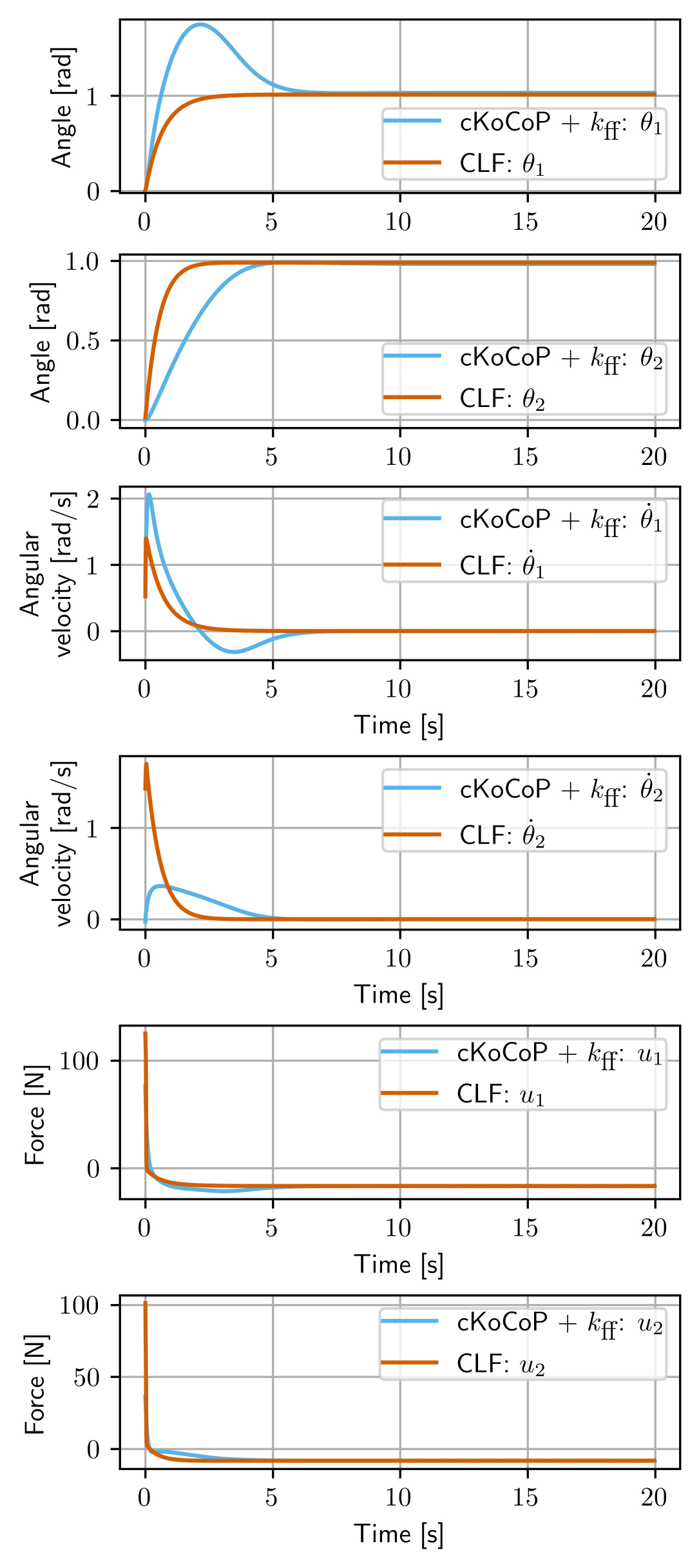}
     \caption{Tracking with CLF compared to tracking with recalculating feedforward term to a static reference.}
     \label{fig:CLFvsErrorDynamics} 
 \end{figure}
 
Both controllers drive the system toward the desired setpoint. The CLF-based controller
achieves a faster transient response because it updates $k_{\textrm{ff}}$ online and
explicitly enforces a one-step Lyapunov decrease. In contrast, the constant
feedforward-feedback controller selects $k_{\text{ff}}$ so that the nominal error
dynamics follow $e_{x,t+1}=\bar K e_{x,t}$.  However, the
constant feedforward-feedback controller is computationally significantly faster: one
control step takes $0.38~ms$ on average, compared with $3.16~ms$ for the CLF-based controller.
\begin{remark}
In the presence of Koopman model mismatch caused by assumptions \eqref{eq:assumption1} and \eqref{eq:assumption2},
this fixed feedforward term can lead to small steady-state errors.
\end{remark}

\subsection{Reference Trajectory Tracking Performance}
To measure the tracking performances, we construct various tracking scenarios for the dynamical systems.
For the double pendulum, we design a task-space reference trajectory with a lemniscate shape, consisting of $500$ steps, corresponding to $10$ seconds. Within this task, we consider two different initial conditions. In the first case, the system starts from the same initial condition as the reference trajectory $\mathbf{x}_0 = \bmat{0.927 & -0.927 & 0& 0}^\top$ which corresponds to the center point of the lemniscate shape with $x=0$, $y=1.2$ in the task space. However, in the second case the system starts from a different initial condition $\mathbf{x}_0 = \bmat{1.127 & -0.827 & 0 & 0}^\top$ and is tasked to follow the same reference trajectory. The corresponding tracking results are demonstrated in Fig.~\ref{fig:Tracking_double_same_init}, for the same initial condition case, and Fig.~\ref{fig:Tracking_double_diff_init} for the different initial condition case. Their associated force plots are presented in Fig.~\ref{fig:Tracking_double_same_force} and Fig.~\ref{fig:Tracking_double_diff_force} 

For the unicycle model, we consider a sinusoidal reference trajectory corresponding to a slalom maneuver. We simulate the system over 2000 time steps, corresponding to 20 seconds. The first trajectory starts from $\bmat{0 & 0 & 0}^\top$, and the second one from $\bmat{0 & -1 & 0}^\top$, corresponding to the same and different initial conditions, respectively. Fig.~\ref{fig:Tracking_uni_same_init} shows the trajectory following the same initial condition, and Fig.~\ref{fig:Tracking_uni_diff_init} displays the different initial condition case. Their corresponding force plots are presented in Fig.~\ref{fig:Tracking_uni_same_force} and Fig.~\ref{fig:Tracking_uni_diff_force}.

Across both systems, the proposed CLF controller tracks the prescribed reference trajectories for both same and different initial conditions as phase-portrait displays. The input plots show that the forces remain within the prescribed admissible sets.

\subsection{Trajectory Tracking Under Noise and Disturbance}
Since the proposed controller is synthesized from a data-driven approach, its closed loop performance may be affected by observation noise, modeling error of the bilinear Koopman system, and unmodeled actuation disturbances. We therefore evaluate the robustness of the proposed controller under noisy reference observations and input disturbances.
\begin{equation}
    \hat{x}_{\textrm{ref}} = x_{\textrm{ref}} + \epsilon_x,
\end{equation}
and model input disturbances as
\begin{equation}
    x_{t+1} = f(x_t, u_t + \epsilon_u).
\end{equation}

We assume $\epsilon_x, \epsilon_u \sim \mathcal{N}(0, \sigma^2 I)$ for the double pendulum system. We vary the noise level over $\sigma^2 \in \{0.2, 0.4, \dots, 1.0\}$ and evaluate performance by computing steady-state errors. For each noise level, results are averaged over 10 independent trajectories of length $N = 100$, using a step reference input.

A representative trajectory under the highest noise level is shown in Fig.~\ref{fig:noisy_pendulum}, which displays the end-effector phase portrait. Fig.~\ref{fig:noisyPendulumFeatures} shows the corresponding angle and angular-velocity tracking errors for the same run.

Despite significant perturbations, the controller is able to track the reference. The steady-state errors and MSEs of the trajectories are displayed with respect to different noise variances in Fig.~\ref{fig:noisy_error}. As expected, higher noise levels lead to increased steady-state error, but the degradation remains gradual, indicating robustness of the proposed method.

\begin{figure}[h!]
    \centering
    \includegraphics[width=1\linewidth]{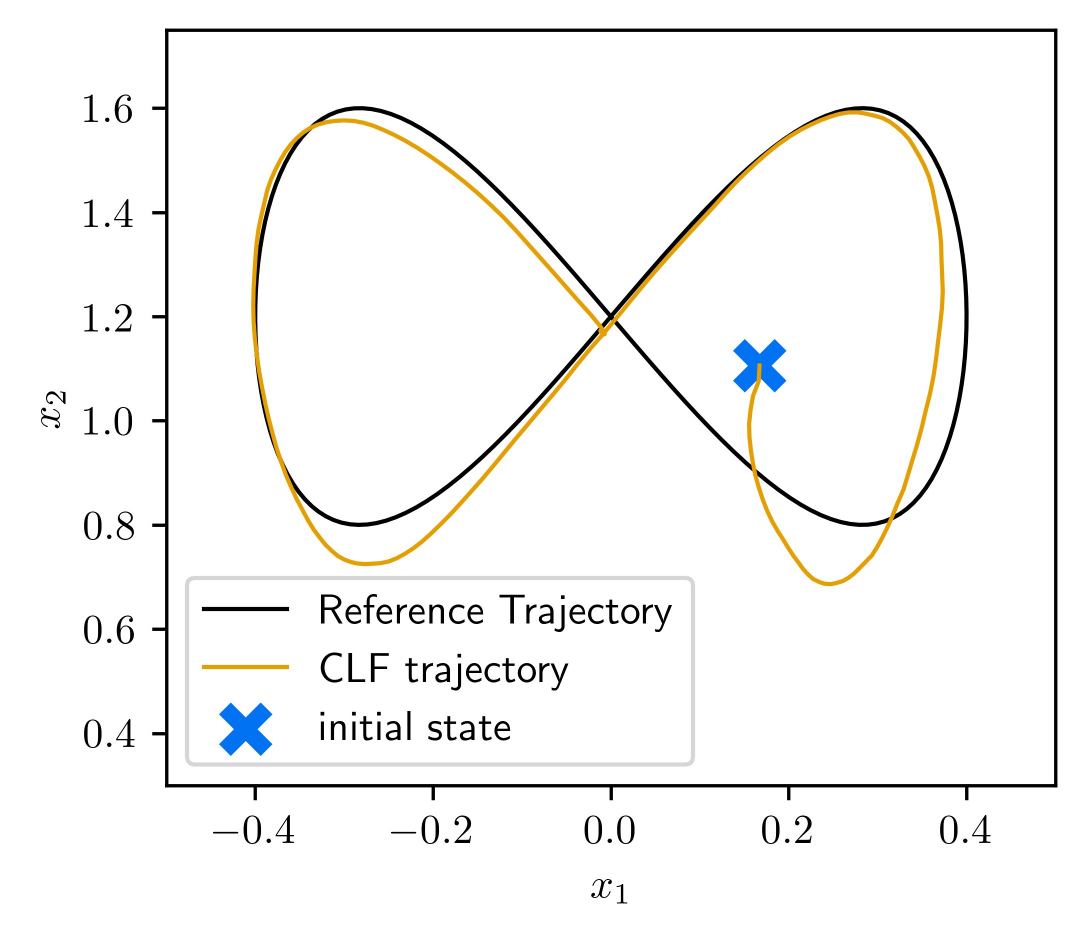}
    \caption{Representative trajectory of the pendulum under observation and input noise with $\epsilon_x, \epsilon_u \sim \mathcal{N}(0, I)$. The controller maintains stable tracking despite significant noise.}
    \label{fig:noisy_pendulum}
\end{figure}

\begin{figure}[h!]
    \centering
    \includegraphics[width=1\linewidth]{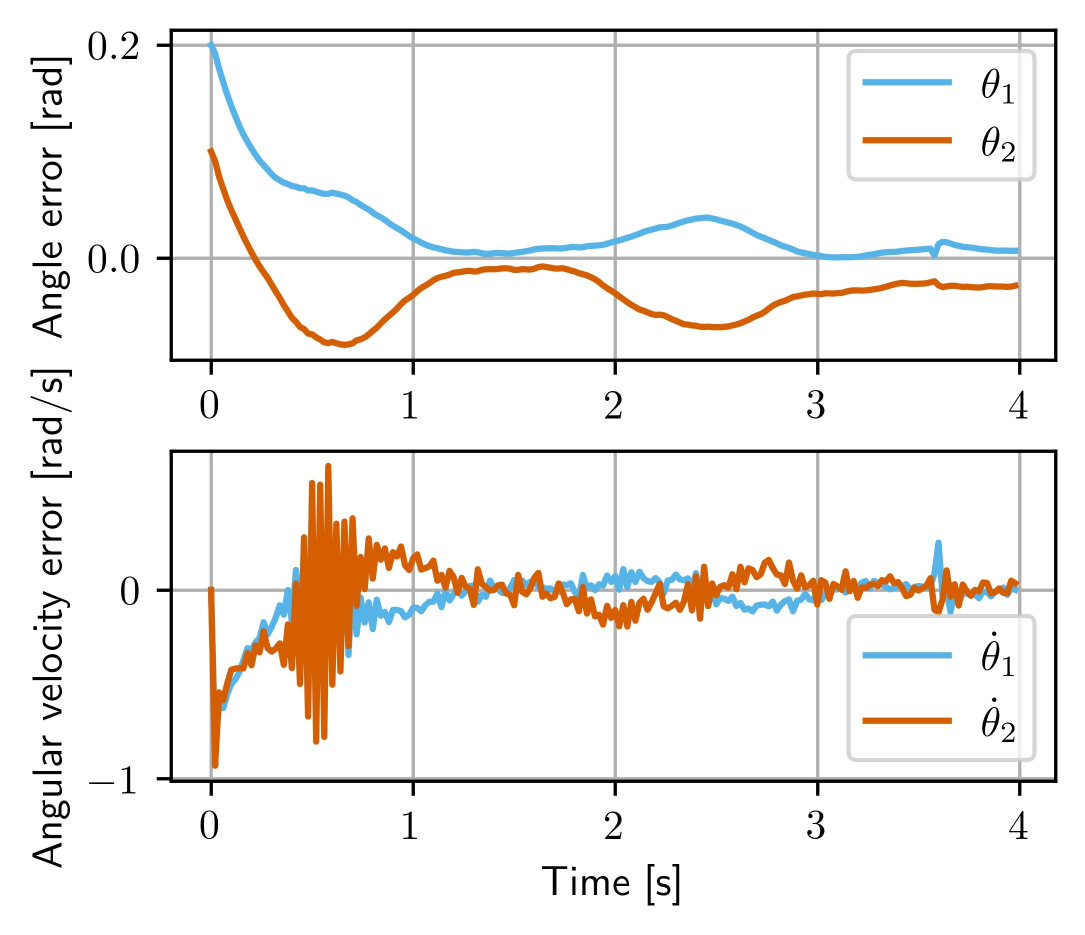}
    \caption{Evolution of angle and angular velocity under noise of $\epsilon_x, \epsilon_u \sim \mathcal{N}(0, 0.5^2 I)$.}
    \label{fig:noisyPendulumFeatures}
\end{figure}

\begin{figure}[h!]
    \centering
    \includegraphics[width=1\linewidth]{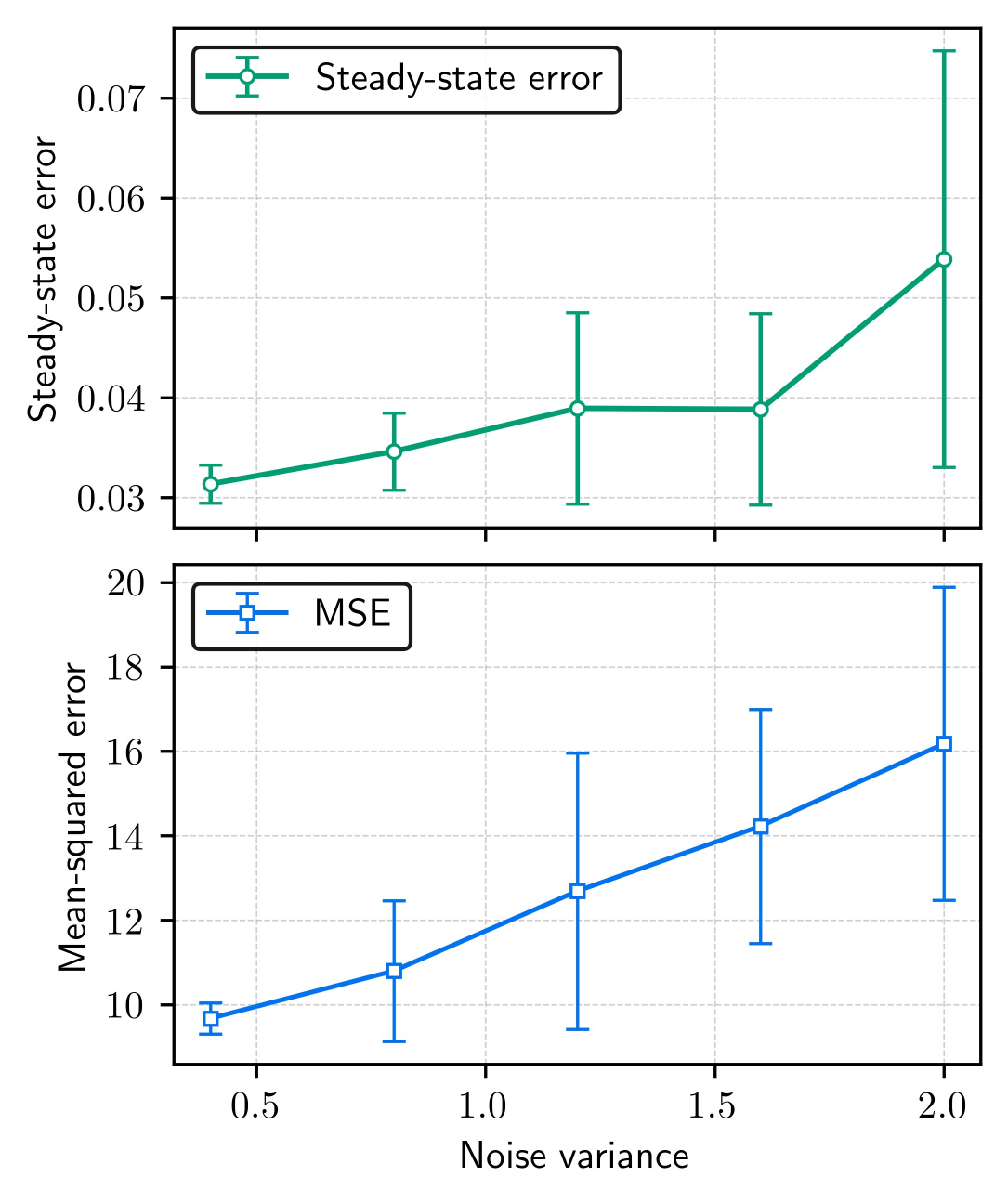}
    \caption{Steady-state tracking error as a function of noise level. Results are averaged over 10 trajectories of length $N=100$ for each $\sigma^2$.}
    \label{fig:noisy_error}
\end{figure}

\subsection{Comparison of Computational Runtimes}
We first report the average cost of one CLF optimization step. A single CLF optimization for a double pendulum with polytopic constraints requires 3.58 ms on average. In comparison, the CLF formulation with ellipsoidal constraints on the unicycle model requires 3.86 ms per optimization on average. Although the double pendulum model has larger observable dimensions $n_{\vpsi_x}$ and $n_{\vpsi_u}$ than the unicycle model, it remains slightly faster because the polytopic formulation introduces linear constraints, whereas the ellipsoidal formulation leads to a quadratic constraint. All computations were performed on a machine equipped with an Intel Core i7-10750 CPU at 2.60 GHz with 6 cores.

We further compare our method with three different MPC approaches similar to the one presented in \cite{bruder2021advantages}, K-LMPC denotes that the model is linearized around the origin. K-BMPC on the other hand, utilizes the bilinear structure to linearize the system around the initial state, yielding the approximate transition
\begin{equation}
    \vpsi_{x,t+1} = K_{xx}\vpsi_x + Bu_t + K_{xu}\vpsi_{x, t=0}.
\end{equation}
We further model a nonlinear MPC (K-NMPC) utilizing the CasADi software package \cite{Andersson2019} that utilizes the bilinear Koopman structure. Finally, we model MPC in true nonlinear state-space dynamics of the system as a baseline. For stage costs, and terminal cost, we re-use the Lyapunov matrix $P$ to ensure a fair comparison. We set control costs as $R = 10^{-4}$ for both dynamical systems, for each of the algorithms K-LMPC, K-BMPC and K-NMPC.

We sample four lemniscate-shaped trajectories with different sizes, and sample initial conditions uniformly around $\bmat{0.25& 0.25& 0& 0}$ for the double pendulum, and sample four "slalom" shapes with different amplitudes and frequencies for the unicycle. We compute mean square errors (MSE) and runtimes of a single step optimization of both algorithms. For MPCs we compute the time horizons of $T = 10, 20$ and $30$, and report the results in Table \ref{tab:dp_mse_runtime} for double pendulum and Table \ref{tab:uni_mse_runtime} for the unicycle models. We observe that the proposed CLF method performs the fastest runtimes for solving the control problem in both of the systems, while having comparable MSE performances. Since the observable space is larger in double pendulum, the runtime for CLF ($3.58~ms$) is significantly lower than the fastest MPC baseline, K-LMPC ($20.10~ms$). This gap is expected because the MPC methods solve a finite-horizon optimization problem whose computational cost grows with the observable dimension, whereas the CLF problem solves a single step optimization.

\begin{table}[h]
    \centering
    \caption{Double-pendulum tracking performance and computation time. Values are reported as mean $\pm$ standard deviation.}
    \label{tab:dp_mse_runtime}
    \begin{tabular}{lcc}
        \toprule
        Controller & MSE & Runtime (ms) \\
        \midrule
        CLF           & $0.009675 \pm 0.002723$ & $\mathbf{3.26 \pm 1.34}$ \\
        K-LMPC $T=10$ & $0.008536 \pm 0.002133$ & $8.06 \pm 3.97$ \\
        K-LMPC $T=20$ & $0.008479 \pm 0.002132$ & $17.56 \pm 8.74$ \\
        K-LMPC $T=30$ & $0.008446 \pm 0.002145$ & $24.46 \pm 16.14$ \\
        K-BMPC $T=10$ & $0.008536 \pm 0.002133$ & $8.41 \pm 3.47$ \\
        K-BMPC $T=20$ & $0.008479 \pm 0.002132$ & $17.77 \pm 8.70$ \\
        K-BMPC $T=30$ & $0.008446 \pm 0.002145$ & $24.62 \pm 16.20$ \\
        K-NMPC $T=10$ & $0.008643 \pm 0.002826$ & $46.24 \pm 5.20$ \\
        K-NMPC $T=20$ & $0.008549 \pm 0.002879$ & $85.27 \pm 8.26$ \\
        K-NMPC $T=30$ & $0.008641 \pm 0.002871$ & $126.36 \pm 11.04$ \\
        NMPC $T=10$    & $0.002426 \pm 0.001360$ & $32.23 \pm 3.45$ \\
        NMPC $T=20$    & $0.002767 \pm 0.001432$ & $58.45 \pm 5.55$ \\
        NMPC $T=30$    & $0.002653 \pm 0.001398$ & $85.91 \pm 8.71$ \\
        \bottomrule
    \end{tabular}
\end{table}

\begin{table}[h]
    \centering
    \caption{Unicycle tracking performance and computation time. Values are reported as mean $\pm$ standard deviation.}
    \label{tab:uni_mse_runtime}
    \begin{tabular}{lcc}
        \toprule
        Controller & MSE & Runtime (ms) \\
        \midrule
        CLF           & $0.004396 \pm 0.000123$ & $\mathbf{3.24 \pm 0.71}$ \\
        K-LMPC $T=10$ & $0.003076 \pm 0.001931$ & $6.84 \pm 3.23$ \\
        K-LMPC $T=20$ & $0.003439 \pm 0.001916$ & $12.08 \pm 6.81$ \\
        K-LMPC $T=30$ & $0.004106 \pm 0.002368$ & $17.02 \pm 13.21$ \\
        K-BMPC $T=10$ & $0.003076 \pm 0.001931$ & $7.16 \pm 3.39$ \\
        K-BMPC $T=20$ & $0.003439 \pm 0.001916$ & $12.46 \pm 6.75$ \\
        K-BMPC $T=30$ & $0.004106 \pm 0.002368$ & $17.05 \pm 8.64$ \\
        K-NMPC $T=10$ & $0.002380 \pm 0.000493$ & $33.34 \pm 4.30$ \\
        K-NMPC $T=20$ & $0.002641 \pm 0.000742$ & $60.85 \pm 8.28$ \\
        K-NMPC $T=30$ & $0.004751 \pm 0.001603$ & $88.83 \pm 11.27$ \\
        NMPC $T=10$    & $0.000383 \pm 0.000074$ & $15.12 \pm 2.42$ \\
        NMPC $T=20$    & $0.000289 \pm 0.000078$ & $26.56 \pm 3.81$ \\
        NMPC $T=30$    & $0.000280 \pm 0.000078$ & $37.55 \pm 5.31$ \\
        \bottomrule
    \end{tabular}
\end{table}

\begin{figure}[h!]
    \centering
    \includegraphics[width=1\linewidth]{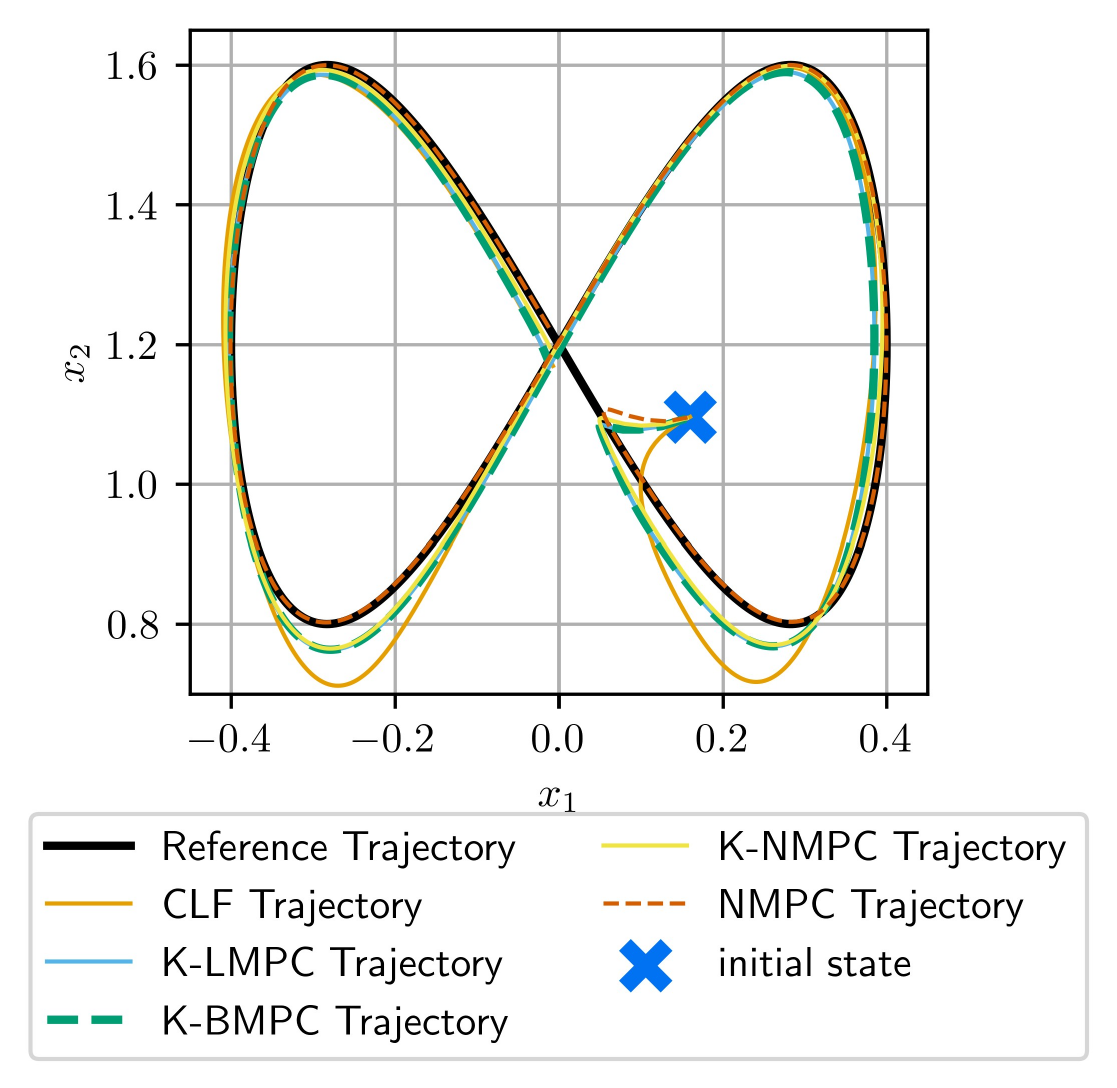}
    \caption{Comparison of CLF with different MPC types}
    \label{fig:trajectory_comparison}
\end{figure}

\section{Conclusion}
This paper presented a method for stabilizing and tracking reference trajectories under input constraints. An optimization problem is presented that follows the actuation limits that are modeled to be polytopic or ellipsoid sets. Under the same constraints, we expand the results to stabilizing the system around a provided setpoint with a feedforward term, and finally, we have presented a CLF-based tracking algorithm to extend it to track a time varying reference trajectory. Since the provided algorithm does not directly address the modeling errors between the Koopman system and the actual dynamical system, future work will investigate the use of integral-based control terms to compensate for steady-state errors. 

\section*{APPENDIX}\label{sec:appendix}
We provide additional discussion for validity of Lyapunov function under the presence of modeling uncertainties generated by the assumptions, and provide an extended sufficient condition for validation the Lyapunov function.

\subsection{Validity of the Lyapunov Condition under Model Mismatch}
\label{app:lyapunov_mismatch}

We consider the effect of mismatch in the structural relation
\eqref{eq:assumption1}. In particular, suppose that the identity
$S\vpsi_x\kron \vpsi_u = H\vpsi_x$ holds only approximately, i.e.,
\begin{equation}
    S\vpsi_x\kron \vpsi_u
    =
    H\vpsi_x+\epsilon,
    \label{eq:approx_assumption1}
\end{equation}
where the mismatch $\epsilon$ is bounded relative to the lifted state as
\begin{equation}
    \|\epsilon\|_2 \leq \gamma \|\vpsi_x\|_2.
    \label{eq:epsilon_bound}
\end{equation}
Under the feedback law $u=K_u\vpsi_u$, the true closed-loop lifted dynamics become
\begin{equation}
    \vpsi_{x,t+1}
    =
    \tilde K\vpsi_{x,t}
    +
    D\epsilon_t,
    \qquad
    D:=K_{xu}(I\kron K_u),
    \label{eq:mismatched_closed_loop}
\end{equation}
where $\tilde K$, defined as in \eqref{eq:closed_loop_linear_matrix}
is the nominal closed-loop matrix obtained under the exact structural assumption.

We seek a quadratic Lyapunov function
\begin{equation}
    V(\vpsi_x)=\vpsi_x^\top P\vpsi_x,
    \qquad P\succ0,
\end{equation}
that remains decreasing for all admissible mismatch signals satisfying
\eqref{eq:epsilon_bound}. Specifically, we require
\begin{equation}
    V(\vpsi_{x,t+1})-\lambda V(\vpsi_{x,t})\leq0,
    \qquad \lambda\in[0,1),
    \label{eq:robust_lyap_requirement}
\end{equation}
for all pairs $(\vpsi_{x,t},\epsilon_t)$ such that
$\|\epsilon_t\|_2^2\leq \gamma^2\|\vpsi_{x,t}\|_2^2$.

Substituting \eqref{eq:mismatched_closed_loop} into
\eqref{eq:robust_lyap_requirement} gives
\begin{equation}
\begin{aligned}
    &
    (\tilde K\vpsi_x+D\epsilon)^\top
    P
    (\tilde K\vpsi_x+D\epsilon)
    -
    \lambda \vpsi_x^\top P\vpsi_x
    \leq 0 .
\end{aligned}
\label{eq:robust_lyap_quadratic}
\end{equation}
Equivalently,
\begin{equation}
    \begin{bmatrix}
        \vpsi_x \\ \epsilon
    \end{bmatrix}^{\top}
    \begin{bmatrix}
        \tilde K^\top P\tilde K-\lambda P & \tilde K^\top P D \\
        D^\top P\tilde K & D^\top P D
    \end{bmatrix}
    \begin{bmatrix}
        \vpsi_x \\ \epsilon
    \end{bmatrix}
    \leq0 .
    \label{eq:robust_lyap_matrix_form}
\end{equation}
The mismatch bound \eqref{eq:epsilon_bound} can be written as
\begin{equation}
    \begin{bmatrix}
        \vpsi_x \\ \epsilon
    \end{bmatrix}^{\top}
    \begin{bmatrix}
        \gamma^2 I & 0 \\
        0 & -I
    \end{bmatrix}
    \begin{bmatrix}
        \vpsi_x \\ \epsilon
    \end{bmatrix}
    \geq 0 .
    \label{eq:mismatch_quadratic_constraint}
\end{equation}
By the S-procedure, a sufficient condition for
\eqref{eq:robust_lyap_quadratic} to hold for all
$\epsilon$ satisfying \eqref{eq:epsilon_bound} is the existence of a scalar
$\tau\geq0$ such that
\begin{equation}
    \begin{bmatrix}
        \tilde K^\top P\tilde K-\lambda P & \tilde K^\top P D \\
        D^\top P\tilde K & D^\top P D
    \end{bmatrix}
    +
    \tau
    \begin{bmatrix}
        \gamma^2 I & 0 \\
        0 & -I
    \end{bmatrix}
    \preceq 0 .
    \label{eq:robust_lyap_lmi}
\end{equation}
Equivalently,
\begin{equation}
    \begin{bmatrix}
        \tilde K^\top P\tilde K-\lambda P+\tau\gamma^2 I
        &
        \tilde K^\top P D \\
        D^\top P\tilde K
        &
        D^\top P D-\tau I
    \end{bmatrix}
    \preceq 0
    \label{eq:robust_lyap_lmi_expanded}
\end{equation}
with $P\succ0$, $\lambda\in[0,1)$, and $\tau\geq0$. Therefore, if
\eqref{eq:robust_lyap_lmi_expanded} is feasible, then $P$ remains a valid Lyapunov
matrix for mismatched closed-loop dynamics \eqref{eq:mismatched_closed_loop}.

\begin{remark}
    The bound $\gamma$ can be computed as the maximum over the dataset
    \begin{equation}
        \gamma = \max_{(\vpsi_x, \vpsi_u) \in \cD} \frac{\|\epsilon\|}{\|\vpsi_x\|} = \max_{(\vpsi_x, \vpsi_u) \in \cD} \frac{\|\vpsi_x - S\vpsi_x \kron \vpsi_u\|}{\|\vpsi_x\|}
    \end{equation}
\end{remark}

\bibliographystyle{biblio/IEEEtran}
% \bibliography{biblio/IEEEAbrv,biblio/IEEEConfAbrv,biblio/OtherAbrv,biblio/control,biblio/controlAdaptive,biblio/controlCBFs,biblio/koopman,biblio/koopmanControl,biblio/koopmanSoftRobots,biblio/koopmanFeedbackEstimation,biblio/math,biblio/optimization,biblio/tron,biblio/writing,biblio/education,biblio/formationControl,biblio/websites}
\bibliography{biblio/IEEEFull,biblio/IEEEConfFull,biblio/OtherFull,biblio/control,biblio/controlAdaptive,biblio/controlCBFs,biblio/koopman,biblio/koopmanControl,biblio/koopmanSoftRobots,biblio/koopmanFeedbackEstimation,biblio/math,biblio/optimization,biblio/tron,biblio/writing,biblio/education,biblio/formationControl,biblio/websites}

\end{document}

%% file: preamble/common.tex
\input{preamble/commonNoTikz}
\input{preamble/graphicsTikz}
\input{preamble/robotics}

%% file: preamble/commonNoTikz.tex
\input{preamble/fixes}
\input{preamble/graphics}
\input{preamble/pagination}
\input{preamble/math}
\input{preamble/operators}
\input{preamble/notation}
\input{preamble/units}
\input{preamble/markupAndCommenting}
\input{preamble/utilities}

%% file: preamble/fixes.tex
\usepackage{fix-cm}
\usepackage{etex}

\usepackage{dblfloatfix}

\usepackage{nag}

%% file: preamble/graphics.tex
\makeatletter
\@ifpackageloaded{xcolor}{}{%
\usepackage[table,x11names,dvipsnames,svgnames]{xcolor}%
}
\makeatother

\usepackage{colortbl}

\usepackage{graphicx}
\usepackage{wrapfig}

\input{preamble/graphicsColors}

%% file: preamble/graphicsColors.tex
\definecolorset{RGB}{lyft}{}{Red,194,39,36;Sunset,202,53,33;Orange,205,68,20;Amber,200,117,42;Yellow,242,169,52;Citron,186,188,44;Lime,112,159,33;Green,56,139,31;Mint,45,118,56;Teal,52,133,135;Cyan,60,132,202;Blue,55,94,248;Indigo,64,13,247;Purple,115,42,248;Pink,176,25,145;Rose,176,32,75}

\definecolorset{HTML}{h}{}{grapefruit,ED5565;bittersweet,FC6E51;sunflower,FFCE54;grass,A0D468;mint,48CFAD;aqua,4FC1E9;bluejeans,5D9CEC;lavender,AC92EC;pinkrose,EC87C0;lightgray,F5F7FA;mediumgray,CCD1D9;darkgray,656D78}

%% file: preamble/pagination.tex
\usepackage{cite}

\usepackage{microtype}

\usepackage[american]{babel}

\usepackage{array}
\usepackage{multirow}
\usepackage{booktabs}
\usepackage{makecell} %introduces \thead and \makecell

\ifcsname labelindent\endcsname
\let\labelindent\relax
\fi
\usepackage[inline]{enumitem}

\usepackage{subfig}

\newenvironment{lenumerate}[2][]
{\begin{enumerate}[label=(#2\arabic*),leftmargin=0.2in,itemindent=0.15in,#1]}
{\end{enumerate}}

\setlist*[enumerate,1]{label={\itshape\arabic*)}}

\makeatletter
\newcommand{\paragraphswithstop}{%
\let\copyparagraph\paragraph%
\renewcommand\paragraph[1]{\copyparagraph{##1.}}%
}
\makeatother

\usepackage[framemethod=tikz]{mdframed}

\makeatletter
\def\namedlabel#1#2{\begingroup
  #2%
  \def\@currentlabel{#2}%
  \phantomsection\label{#1}\endgroup
}
\makeatother

\makeatletter
\def\namedlabelphantom#1#2{\begingroup
  \def\@currentlabel{#2}%
  \phantomsection\label{#1}\endgroup
}
\makeatother

\newcommand{\parunskip}{\bgroup\unskip\parfillskip=0pt \par\egroup}

%% file: preamble/operators.tex
\newcommand{\bmat}[1]{\begin{bmatrix}#1\end{bmatrix}}

\newcommand{\transpose}{^\mathrm{T}}

\newcommand{\defeq}{\doteq}

\newcommand{\vct}[1]{\mathbf{#1}}

\DeclareMathOperator*{\argmax}{\arg\!\max}

\newcommand{\kron}{\otimes}

\newcommand{\subjectto}{\textrm{subject to}\;}

%% file: preamble/notation.tex
\providecommand{\vx}{\vct{x}}

\providecommand{\vpsi}{\bm{\psi}}

\providecommand{\cD}{\mathcal{D}}

\providecommand{\cP}{\mathcal{P}}

\providecommand{\cS}{\mathcal{S}}

%% file: preamble/units.tex
\usepackage{units}

%% file: preamble/markupAndCommenting.tex
  \newcommand{\newcolorlabel}[2]{%
  \expandafter\newcommand\csname #1\endcsname[1]{%
    \tikz[baseline]{\node[text=white,fill=#2,anchor=base,text height=1.3ex,text depth=0.1ex,font=\sffamily\bfseries]{##1}}}%
}

\newcommand{\newcommenter}[2]{%
  \expandafter\newcommand\csname #1\endcsname[1]{%
    \fcolorbox{#2}{#2}{\color{white}\textsf{\textbf{#1}}}
    {\color{#2}##1}}%
  \expandafter\newcommand\csname at#1\endcsname{%
    \fcolorbox{#2}{#2}{\color{white}\textsf{\textbf{@#1}}}
    {\color{#2}}}%
  \expandafter\newcommand\csname #1cite\endcsname[1]{%
    \csname #1\endcsname {[##1]}
  }%
  \expandafter\newcommand\csname #1ref\endcsname[1]{%
    \csname #1\endcsname {$\blacktriangleright$##1}
  }%
  \expandafter\newcommand\csname #1hl\endcsname[2]{%
    \colorbox{#2}{\color{white}\textsf{\textbf{#1}}}\sethlcolor{Azure2}\hl{##2}~%
    \expandafter\ifx\csname commentarrow\endcsname\relax$\leftarrow$\else \commentarrow[#2]\fi~%
    {\color{#2}##1}}%
  \expandafter\newcommand\csname #1st\endcsname[2]{%
    \colorbox{#2}{\color{white}\textsf{\textbf{#1}}}\sout{##2}~%
    \expandafter\ifx\csname commentarrow\endcsname\relax$\leftarrow$\else \commentarrow[#2]\fi~%
    {\color{#2}##1}}%
}
\newcommenter{TODO}{DodgerBlue1}
\newcommenter{rtron}{Green3}
\newcommenter{taha}{Firebrick1}
\newcommenter{AltConvOpt}{RoyalBlue3}

\usepackage{comment}

\usepackage{pdfcomment}

\usepackage{soul}

\usepackage[normalem]{ulem}

\usepackage{csquotes}

%% file: preamble/utilities.tex
\usepackage{suffix}

\usepackage{environ}

\makeatletter
\newsavebox{\boxifnotempty}
\newcommand{\displayifnotempty}[3]{\sbox\boxifnotempty{#2}\setbox0=\hbox{\usebox{\boxifnotempty}\unskip}%
  \ifdim\wd0=0pt
  \else
  #1\usebox{\boxifnotempty}#3%
  \fi%
}

\newcommand{\ifempty}[2]{\setbox0=\hbox{#1\unskip}%
  \ifdim\wd0=0pt%
  #2%
  \fi%
}

\newcommand{\ifnotempty}[2]{\setbox0=\hbox{#1\unskip}%
  \ifdim\wd0>0pt%
  #2%
  \fi%
}
\makeatother

\newcommand{\switchifempty}[3]{\sbox\boxifnotempty{#1}\setbox0=\hbox{\usebox{\boxifnotempty}\unskip}%
  \ifdim\wd0=0pt{}%
  #2%
  \else{}%
  #3%
  \usebox{\boxifnotempty}%
  \fi%
}

\makeatletter
\@ifundefined{chapter}{\usepackage{algorithm}}{\usepackage[chapter]{algorithm}}
\makeatother
\usepackage{algorithmicx}
\usepackage{algpseudocode}
\makeatother%

\usepackage{scrlfile}

\makeatletter
\newcommand*\newstoreddef[1]{
  \BeforeClosingMainAux{%
    \immediate\write\@auxout{%
      \string\restoredef{#1}{\csname #1\endcsname}%
    }%
  }%
}
\newcommand*{\restoredef}[2]{% used at the aux file
  \expandafter\gdef\csname stored@#1\endcsname{#2}%
}
\newcommand*{\storeddef}[1]{
  \@ifundefined{stored@#1}{0}{\csname stored@#1\endcsname}%
}
\makeatother

\usepackage{pageslts}
\NewEnviron{tee}{\BODY\typeout{Marker Tee [start] ^^J \BODY ^^JMaker Tee [end]}}

\usepackage{cleveref}
\usepackage{nameref}

%% file: preamble/graphicsTikz.tex
\usepackage{tikz}
\usetikzlibrary{calc}
\usetikzlibrary{matrix}
\usetikzlibrary{chains,scopes}
\usetikzlibrary{shapes.geometric}
\usetikzlibrary{arrows.meta}
\usetikzlibrary{decorations.markings}
\usetikzlibrary{decorations.pathreplacing}
\usetikzlibrary{backgrounds}

\tikzset{
  dim above/.style={to path={\pgfextra{
        \pgfinterruptpath
        \draw[>=latex,|->|] let
        \p1=($(\tikztostart)!1.5em!90:(\tikztotarget)$),
        \p2=($(\tikztotarget)!1.5em!-90:(\tikztostart)$)
        in(\p1) -- (\p2) node[pos=.5,sloped,above]{#1};
        \endpgfinterruptpath
      }
    }
  },
  dim double above/.style={to path={\pgfextra{
        \pgfinterruptpath
        \draw[>=latex,|->|] let
        \p1=($(\tikztostart)!3em!90:(\tikztotarget)$),
        \p2=($(\tikztotarget)!3em!-90:(\tikztostart)$)
        in(\p1) -- (\p2) node[pos=.5,sloped,above]{#1};
        \endpgfinterruptpath
      }
    }
  },
  dim below/.style={to path={\pgfextra{
        \pgfinterruptpath
        \draw[>=latex,|->|] let
        \p1=($(\tikztostart)!-1em!-90:(\tikztotarget)$),
        \p2=($(\tikztotarget)!-1em!90:(\tikztostart)$)
        in (\p1) -- (\p2) node[pos=.5,sloped,below]{#1};
        \endpgfinterruptpath
      }
    }
  },
}

\tikzset{
    right angle quadrant/.code={
        \pgfmathsetmacro\quadranta{{1,1,-1,-1}[#1-1]}     % Arrays for selecting quadrant
        \pgfmathsetmacro\quadrantb{{1,-1,-1,1}[#1-1]}},
    right angle quadrant=1, % Make sure it is set, even if not called explicitly
    right angle length/.code={\def\rightanglelength{#1}},   % Length of symbol
    right angle length=2ex, % Make sure it is set...
    right angle symbol/.style n args={3}{
        insert path={
            let \p0 = ($(#1)!(#3)!(#2)$) in     % Intersection
                let \p1 = ($(\p0)!\quadranta*\rightanglelength!(#3)$), % Point on base line
                \p2 = ($(\p0)!\quadrantb*\rightanglelength!(#2)$) in % Point on perpendicular line
                let \p3 = ($(\p1)+(\p2)-(\p0)$) in  % Corner point of symbol
            (\p1) -- (\p3) -- (\p2)
        }
    }
}

\newcommand{\pgfextractangle}[3]{%
    \pgfmathanglebetweenpoints{\pgfpointanchor{#2}{center}}
                              {\pgfpointanchor{#3}{center}}
    \global\let#1\pgfmathresult
}

\usetikzlibrary{shapes.arrows}
\newcommand{\commentarrow}[1][Azure4]{\tikz[baseline=-3pt]{\node[shape border uses incircle, fill=#1,rotate=180,single arrow, inner sep=1pt, minimum size=6pt, single arrow head extend=2pt]{};}}

%% file: preamble/robotics.tex
\tikzset{ax/.style={-latex,line width=2pt}}

\tikzset{camera/.style={fill=Sienna1,fill opacity=0.5},%
image plane/.style={draw=RoyalBlue3,line width=2pt}}

%% file: preamble/reviewDraftClean.tex
\makeatletter
\newcommand{\@LN@col}[1]{}
\newcommand{\@LN}[2]{}
\makeatother